\documentclass[AMA,STIX1COL]{WileyNJD-v2}
\usepackage{amssymb}
\usepackage{amsthm}
\usepackage{tabularx}
\usepackage{amsmath}

\articletype{LEARNING-BASED ADAPTIVE CONTROL : THEORY AND APPLICATIONS}%


\raggedbottom

\begin{document}

\title{Observer-based Adaptive Optimal Output Containment Control problem of Linear Heterogeneous Multi-agent Systems with Relative Output Measurements}

\author[1]{Majid Mazouchi}

\author[1]{Mohammad Bagher Naghibi-Sistani}

\author[1]{Seyed Kamal Hosseini Sani}

\author[2]{Farzaneh Tatari}

\author[3]{Hamidreza Modares}


\address[1]{Department of Electrical Engineering, Ferdowsi University of Mashhad, Mashhad, Iran}

\address[2]{Department of Electrical Engineering,  University of Semnan, Semnan, Iran}

\address[3]{Missouri University of Science and Technology, Rolla, MO 65401, USA}



\abstract[Summary]{This paper develops an optimal relative output-feedback based solution to the containment control problem of linear heterogeneous multi-agent systems. A distributed optimal control protocol is presented for the followers to not only assure that their outputs fall into the convex hull of the leaders' output (i.e., the desired or safe region), but also optimizes their transient performance. The proposed optimal control solution is composed of a feedback part, depending of the followers' state, and a feed-forward part, depending on the convex hull of the leaders' state. To comply with most real-world applications, the feedback and feed-forward states are assumed to be unavailable and are estimated using two distributed observers. That is, since the followers cannot directly sense their absolute states, a distributed observer is designed that uses only relative output measurements with respect to their neighbors (measured for example by using range sensors in robotic) and the information which is broadcasted by their neighbors to estimate their states. Moreover, another adaptive distributed observer is designed that uses exchange of information between followers over a communication network to estimate the convex hull of the leaders' state. The proposed observer relaxes the restrictive requirement of knowing the complete knowledge of the leaders' dynamics by all followers. An off-policy reinforcement learning algorithm on an actor-critic structure is next developed to solve the optimal containment control problem online, using relative output measurements and without requirement of knowing the leaders' dynamics by all followers. Finally, the theoretical results are verified by numerical simulations.}

\keywords{adaptive distributed observer, cooperative output regulation, output containment control, optimal control, reinforcement learning}


\maketitle


\section{Introduction}\label{sec1}

Distributed control of multi-agent systems has attracted  a surge of interest in variety of disciplines, due to its broad applications, including cooperation of multiple robot systems \cite{Jad2003,Ji2008}, satellite formation flying \cite{Car2002}, vehicles formation control \cite{Fax2004}, transportation systems \cite{Tom1998}, cooperative surveillance \cite{Ren2007}, distributed sensor networks \cite{Olfati-Saber2005} and so forth. Distributed cooperative control offers many advantages such as less communication requirement, more flexibility, enhanced reliability, and scalability, compared with its centralized counterpart. A fundamental problem in distributed cooperative control of multi-agent systems is consensus or synchronization, in which the goal is to design distributed control policies for agents to ensure that they reach an agreement on certain quantities of interest on their states or outputs, using only the local state or output information available to each agent. A comprehensive review of consensus and synchronization problems is provided in Olfati-Saber et al.\cite{Olfati-Saber2007}. Based on the number of leaders, consensus or synchronization problems can be categorized into three classes, namely, leaderless \cite{Ren2007,Ren2005}, leader-following \cite{Hong2006,Hong2008} and containment control \cite{Ji2008,Lou2012,Notarstefano2011}. In the latest problem, which is the problem of interest in this paper, there exist multiple leaders, and the objective is to drive the followers into a convex geometric space spanned by the leaders. The containment control problem has been extensively investigated in recent years \cite{Ji2008,Lou2012,Notarstefano2011,Meng2010,Cao2012a,Liu2012a,Liu2012b,Mei2012,Li2012,Yoo2013,Li2013,zheng2014,Wang2014,Liu2015,Haghshenas2015,Kan2015,Kan2016}, due to its numerous potential applications in practical engineering, for example, in stellar observation for satellite formation \cite{Dimarogonas2009}, removing hazardous materials for autonomous robots \cite{Ji2008}, and so forth.

In most practical situations, the full state information of agents is unavailable for measurement and/or expensive to measure. For instance, for a group of mobile agents navigating in environments that global navigation satellite systems signals are rather attenuated, such as forests, urban canyons, and even some building interiors, no position measurement might be possible using ordinary global positioning system (GPS) receivers. In this situation, one obvious solution might be attained by installing more precise and powerful GPS receivers on all the agents.  However, not only a more precise and powerful GPS receiver is costly, but also it uses more electrical power due to the fact that it requires more amplifying of attenuated signal and possibly more burden of weight. Therefore, in some real world scenarios, this solution may be unfeasible or too costly. Motivating by the concept of anchor agents \cite{Mao2006} in localization problem \cite{Shames2009} in the context of wireless sensor networks, another solution to the aforementioned situation and scenario is to equip only small fraction of agents (leaders) with more precise and powerful GPS receivers to measure their absolute states. However, the rest of agents (followers), which are just equipped with ordinary GPS receivers, do not have access to their absolute position measurements, and they just have access to relative output measurements with respect to their neighbors and the information which is broadcasted to them through the communication network by their neighbors.

Most of existing containment control protocols focus on the case of homogeneous agents, in which all agents have identical dynamics \cite{Cao2012a,Liu2012a,Li2012,Liu2015,Wen2016}. Some recent results on the containment control problem \cite{zheng2014,Haghshenas2015} have considered the case of heterogeneous followers with non-identical dynamics, but assumed that the dimensions of all agents are the same. However, in many real-world applications, for which there are different types of agents performing different tasks, not only the agents' dynamics but also their dimensions are different. This requires designing distributed control protocols to drive the followers' output into a convex hull spanned by the leaders' output. Nevertheless, existing results based on relative state measurements in zheng et al.\cite{zheng2014} and Haghshenas et al.\cite{Haghshenas2015} cannot be used, as the relative state does not make sense anymore for followers with different dimensions. Although the design of distributed relative output-feedback based control protocols is considered in Li et al.\cite{Li2013} and Wen et al.\cite{Wen2016}, these results are still limited to homogeneous multi-agent systems. Output containment control of heterogeneous multi-agent systems is considered in Zuo et al.\cite{ Zuo2017}. However, in their method, all the followers require their absolute state or output, as well as complete knowledge of the leaders' dynamics, which may not be available to the followers in many applications. Moreover, their approach requires the restrictive assumption of requiring a strongly connected communication graph. 

Besides the above mentioned shortcomings of existing results, another shortcoming is that they do not take into account the transient behavior of the followers and give importance only to the steady-state response of the followers, i.e., they only assure that the followers' states or outputs eventually converge to a convex combination of the leaders' states or outputs. However, it is desired to find optimal solutions that not only guarantee steady-state convergence, but also minimize the transient containment error over time. Another important issue which is not considered in the existing results for containment control is designing online solutions that do not require complete knowledge of the leaders. Reinforcement learning (RL) \cite{Sutton1998,Powell2007} has been successfully used to design adaptive optimal controllers for single-agent systems \cite{Cui2015,Moghadam2017,Vamvoudakis2017,Yasini2014,Modares2012,Zhang2011c,DLiu2016,DLiu2017,DLiu2017b} and multi-agent systems \cite{Modares2016a,Yaghmaie2015,Tatari2016,Tatari2017,Mazouchi2018} online in real time. However, to our knowledge, there is no RL-based solution to the optimal containment control problem.

To overcome the aforementioned shortcomings of the existing work, this paper presents an observer-based adaptive optimal solution to the output containment control problem of linear heterogeneous multi-agent systems, where two distributed observers are used. A distributed adaptive observer is designed to estimate followers' state, and another distributed adaptive observer is developed to estimate the convex hull of the leaders' state. The proposed distributed adaptive observer relaxes the restrictive requirement of knowing the complete knowledge of the leaders' dynamics by all followers. Then, an off-policy reinforcement learning algorithm on an actor-critic structure is developed to solve the optimal output containment control problem online in real time. The proposed algorithm does not require any knowledge of leaders' dynamics and uses only the relative output measured data of the followers and the information which is broadcasted through the communication network by neighbors.
\\
\\
\noindent\textbf{The main contributions of the paper are as follows :}
\begin{enumerate}[1.]
\item A novel distributed dynamic relative output feedback control protocol is developed based on cooperative output regulation framework to solve the output containment control problem of linear fully heterogeneous multi-agent systems.
\item An adaptive distributed observer is presented to estimate the leaders' dynamics, as well as their outputs, and a convex combination of the leaders' states for each follower. In contrast to the existing work, this observer relaxes the restrictive requirement of knowing the complete knowledge of the leaders' dynamics by all followers.
\item An optimal solution to the distributed containment control problem is presented to optimize the transient output containment error of followers as well as their control efforts, while assuring a zero-steady state containment error. 
\item An off-policy RL algorithm is developed to solve the formulated optimal output containment control problem online in real time, using relative output measurements of followers with respect to their neighbors and the information which is broadcasted by neighbors, and without requirement of knowing the complete knowledge of the leaders' dynamics by all followers.
\end{enumerate}

The subsequent sections are organized as follows: some basic concepts of graph theory, definitions and notations are presented in Section 2. Section 3 states the output containment control problem in output regulation framework. Moreover, analysis is provided to find containment control problem offline and non-optimal solution. Distributed adaptive observer is designed in Section 4. The optimality is explicitly imposed in solving the containment control problem in Section 5, which enables us to use RL techniques to learn solution online in real time. Numerical simulation is given to validate the effectiveness of the theoretical results in Section 6. Finally, in Section 7 conclusions are drawn.

\section{Preliminaries}\label{sec2}
\subsection{Notations} \label{subsec2.1}

The following notations are used throughout the paper. Let ${\Re ^n}$ and ${\Re ^{n \times m}}$ represent the $n$ dimensional real vector space and the $n \times m$ real matrix space, respectively. ${0_{m \times n}}$ denotes the $m \times n$ matrix with all zeros. Let ${1_n}$ be a column vector with all entries equal to $1$. ${I_n}$ represents an $n \times n$ identity matrix. $diag\left( {{d_1},...,{d_n}} \right)$ represents a block-diagonal matrix with matrices ${d_1},...,{d_n}$ on its diagonal. ${\left\| . \right\|_2}$ denotes the Euclidean norm. For any matrix ${H_i} \in {\Re ^{n \times q}}$, $i=1,...,m$, $col({H_1},...,{H_m}) = {[H_1^T,...,H_m^T]^T}$ and $Vec({H_i}) = col({H_{i1}},...,{H_{iq}})$ where ${H_{ij}} \in {\Re ^n}$ is the $j$-th column of $H_i$. ${\left[ {H_i} \right]_j}$ denotes the $j$-th row of the matrix ${{H_i}}$. The symbol $ \otimes $ represents the Kronecker product. The distance from $x \in {\Re ^N}$ to the set $\mathcal{C} \subseteq {\Re ^N}$ is denoted by $dist(x,\mathcal{C}) = \mathop {\inf }\limits_{y \in \mathcal{C}} {\left\| {x - y} \right\|_2}$.

\subsection{Graph Theory} \label{subsec2.2}

In this subsection, some basic concepts on algebraic graph theory are briefly reviewed. 
Let the communication topology among $n + m$ agents be presented by a weighted directed acyclic graph $ \mathcal{G} = (V,\mathcal{E},A)$ with a set of nodes $V = \{ {{\nu _1}, \ldots ,{\nu _{n + m}}} \}$, a set of edges $\mathcal{E}  \subseteq V \times V$, and a weighted adjacency matrix $A = [{a_{ij}}]$ with non-negative adjacency elements ${a_{ij}}$, i.e., an edge $( {{\nu _j},{\nu _i}} ) \in \mathcal{E} $ if and only if ${a_{ij}} > 0$. Node ${\nu _j}$ is called the parent node, node  ${\nu _i}$ is the child node, and ${\nu _j}$ is a neighbor of ${\nu _i}$. We assume that there are no self-connections, i.e. $( {{\nu _i},{\nu _i}} ) \notin \mathcal{E} $. A directed graph is acyclic if graph does not have any directed cycle. The set of node ${\nu _i}$ neighbors is denoted by ${\bar N_i} = \{ {{\nu _j} \in V:({\nu _j},{\nu _i}) \in \mathcal{E} ,j \ne i} \}$. A directed path from node ${\nu _i}$ to node ${\nu _j}$ is a sequence of edges $({\nu _i},{\nu _{{k_1}}}),({\nu _{{k_1}}},{\nu _{{k_2}}}),...,({\nu _{{k_\ell }}},{\nu _j})$ with distinct nodes ${\nu _{{k_m}}}$, $ m = 1,...,\ell $ in a directed graph. A directed graph is strongly connected if there is a directed path between every ordered pair of nodes. A directed graph is said to have a spanning forest if there exists at least one node such that there is a directed path from this node to all the other nodes.  An agent is called leader if it does not receive any information from others, i.e., it has no neighbor. Otherwise, it is called a follower. We assume that agents $1$ to $n$ are followers, and agents $n + 1$ to $n + m$ are leaders. For notational convenience, $\mathcal{F} \buildrel \Delta \over = \left\{ {1,...,n} \right\}$ and $\mathcal{R} \buildrel \Delta \over = \left\{ {n + 1,...,n + m} \right\}$ are used to denote, the set of followers and the set of leaders, respectively. The Laplacian matrix $L = [{\ell _{ij}}] \in {\Re ^{(n + m) \times (n + m)}}$ associated with $A$ is defined as ${\ell _{ii}} = \sum\nolimits_{j \in {N_i}} {{a_{ij}}} $ and ${\ell _{ij}} =  - {a_{ij}}$ where $i \ne j$. The Laplacian matrix $L$ associated with $\mathcal{G}$ can be partitioned as 
\begin{align}
L = \left[ {\begin{array}{*{20}{c}}
{{L_1}}&{{L_2}}\\
{{0_{m \times n}}}&{{0_{m \times m}}}
\end{array}} \right]  \label{eq:1}
\end{align}
where ${L_1} \in {\Re ^{n \times n}}$ and ${L_2} \in {\Re ^{n \times m}}$. Note that since the last $m$ agents are the leaders, the last $m$ rows of $L$ are all equal to zero. 

In the sequel, we assume that the communication graph  satisfies the following assumption.

\begin{assumption}
The directed graph $\mathcal{G}$ is acyclic, and for each follower, there exists at least one leader that has a directed path to it. 
\end{assumption}

\begin{lemma}[Meng et al.\cite{Meng2010}]
Under Assumption 1, ${L_1}$ is invertible, all the eigenvalues of ${L_1}$ have positive real parts, each entry of $ - L_1^{ - 1}{L_2}$ is non-negative, and each row of $ - L_1^{ - 1}{L_2}$ has a sum equal to one.
\end{lemma}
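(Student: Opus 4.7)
The plan is to exploit the block structure of $L$ in \eqref{eq:1} together with the two topological consequences of Assumption 1: acyclicity yields a topological ordering of the followers, and leader-reachability yields strictly positive diagonal entries of $L_1$. These two facts will drive the whole argument.

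First I would show invertibility and the eigenvalue claim. Since the induced subgraph on the followers is itself a DAG, I can topologically order the followers so that if $\nu_j$ is a parent of $\nu_i$ (both followers), then $j$ precedes $i$. With this relabeling, the off-diagonal entry $\ell_{ij}=-a_{ij}$ of $L_1$ can be nonzero only when $j<i$, so $L_1$ is lower triangular. Its $i$-th diagonal entry is $\ell_{ii}=\sum_{j\in \bar N_i}a_{ij}$, and this sum is strictly positive for every follower, because Assumption 1 forces $\bar N_i\neq\emptyset$ (otherwise no directed path from any leader could end at $\nu_i$). Hence the spectrum of $L_1$ consists of these positive diagonal entries, which simultaneously gives invertibility and positivity of the real parts of all eigenvalues.

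Next, for the row-sum property of $-L_1^{-1}L_2$, I would simply use the defining row-sum property of any Laplacian: $L\mathbf{1}_{n+m}=0$. Partitioning according to \eqref{eq:1}, this gives $L_1\mathbf{1}_n+L_2\mathbf{1}_m=0$, and multiplying on the left by $-L_1^{-1}$ yields $-L_1^{-1}L_2\mathbf{1}_m=\mathbf{1}_n$, which is exactly the statement that each row sums to one.

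Finally, the non-negativity claim reduces to showing $L_1^{-1}\ge 0$ entrywise, because every entry of $-L_2$ is already non-negative (those entries are $a_{ij}$ with $i\in\mathcal{F}$, $j\in\mathcal{R}$). This is the step I expect to be the main obstacle, since it is where one really needs the M-matrix structure. Two routes are natural: either (a) verify that $L_1$ is a nonsingular M-matrix (non-positive off-diagonal, positive diagonal, strictly diagonally dominant at the rows of followers with a leader neighbor, and irreducibly related to such rows through the DAG), and invoke the standard theorem that nonsingular M-matrices have entrywise non-negative inverses; or (b) exploit the lower-triangular form obtained in the first step and compute $L_1^{-1}$ directly by forward substitution, observing inductively that all entries produced are non-negative because each $\ell_{ii}>0$ and each off-diagonal $-\ell_{ij}=a_{ij}\ge 0$. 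Route (b) is self-contained and avoids invoking external M-matrix theory, so I would prefer it; combined with $-L_2\ge 0$ it immediately gives $-L_1^{-1}L_2=L_1^{-1}(-L_2)\ge 0$, completing the proof.
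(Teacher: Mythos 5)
Your proof is correct. Note that the paper itself does not prove this lemma at all; it is imported verbatim from Meng et al.\cite{Meng2010}, so your argument supplies a proof the paper omits rather than paralleling one. All three steps check out: the topological relabeling of the follower subgraph (which is exactly the paper's Proposition~1 from Qin et al.\cite{Qin2013}) makes $L_1$ lower triangular with diagonal entries $\ell_{ii}=\sum_{j\in\bar N_i}a_{ij}>0$, since Assumption~1 rules out $\bar N_i=\emptyset$; the row-sum identity $L\mathbf{1}_{n+m}=0$ gives $-L_1^{-1}L_2\mathbf{1}_m=\mathbf{1}_n$; and writing $L_1=D-N$ with $D>0$ diagonal and $N\ge 0$ strictly lower triangular gives $L_1^{-1}=\bigl(\sum_{k\ge 0}(D^{-1}N)^k\bigr)D^{-1}\ge 0$ (the series terminates by nilpotency), which together with $-L_2\ge 0$ yields the entrywise non-negativity. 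One comparative remark worth keeping in mind: your preferred route (b) leans essentially on acyclicity to get the triangular form, whereas the original result in Meng et al.\ is usually established via nonsingular M-matrix theory (your route (a)) and holds under the weaker hypothesis that each follower is merely reachable from some leader, without any acyclicity assumption. Under Assumption~1 as stated both routes are valid, and your elementary version is self-contained; just be aware it proves a slightly less general statement than the cited one.
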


\begin{proposition}[Qin et al.\cite{Qin2013}]

A directed acyclic graph $\mathcal{G}$ can be relabeled such that its Laplacian matrix is lower triangular matrix.

\end{proposition}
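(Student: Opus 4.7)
The plan is to prove this by producing an explicit relabeling via a topological ordering of the DAG, and then verifying that in this ordering the adjacency matrix, and hence the Laplacian, has the desired triangular structure.

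First I would recall that the convention in this excerpt is $a_{ij}>0$ iff $(\nu_j,\nu_i)\in\mathcal{E}$, so $\nu_j$ is the parent and $\nu_i$ is the child. Consequently $\ell_{ij}=-a_{ij}$ for $i\ne j$, and the Laplacian is lower triangular precisely when $a_{ij}=0$ for every $j>i$, i.e., no edge goes from a higher-indexed node to a lower-indexed one. Producing such a labeling is equivalent to finding an ordering of the vertices in which every parent precedes all its children -- a topological ordering.

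Next I would establish existence of such an ordering by induction on $|V|=n+m$. The inductive step relies on the standard fact that every finite DAG contains at least one \emph{source}, i.e., a node with no incoming edges. If this failed, then starting from any node one could repeatedly follow incoming edges backward; since $V$ is finite, some node would be revisited, yielding a directed cycle and contradicting acyclicity. Label such a source as $\nu_1$, remove it together with its outgoing edges, observe that the remaining subgraph is still acyclic, and apply the induction hypothesis to obtain a topological labeling $\nu_2,\ldots,\nu_{n+m}$ of the rest. By construction, for every edge $(\nu_j,\nu_i)\in\mathcal{E}$ we have $j<i$.

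Finally I would translate this back to the matrix level. With the above relabeling, $a_{ij}=0$ whenever $j\ge i$, so the off-diagonal entries $\ell_{ij}=-a_{ij}$ vanish for all $j>i$. The diagonal entries $\ell_{ii}=\sum_{j\in \bar N_i}a_{ij}$ are unaffected by relabeling, so $L=[\ell_{ij}]$ becomes lower triangular after the associated simultaneous row/column permutation (which is a legitimate relabeling of agents, just a conjugation of $L$ by a permutation matrix). I do not anticipate a genuine obstacle: the only delicate point is the base case of the induction (existence of a source in a finite DAG), which is handled by the pigeonhole argument sketched above; the rest is bookkeeping between the graph-theoretic and matrix pictures.
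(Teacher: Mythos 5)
Your proof is correct. The paper itself offers no argument for this proposition---it is quoted from Qin et al.\cite{Qin2013} as a known fact---so there is nothing to compare against; your topological-ordering construction (existence of a source in a finite DAG via the no-directed-cycle pigeonhole argument, induction on $|V|$, and the translation to $a_{ij}=0$ for $j>i$ under the paper's edge convention $(\nu_j,\nu_i)\in\mathcal{E}\Leftrightarrow a_{ij}>0$) is the standard, complete way to establish it.
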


\begin{definition} [ Rockafellar\cite{Rockafellar1972}]
 A set $\mathcal{C} \subseteq {\Re ^\mathcal{N}}$ is said to be convex if $(1 - \lambda )x + \lambda y \in \mathcal{C}$, whenever $x,\,y \in \mathcal{C}$ and $0 \le \lambda  \le 1$. The convex hull $Co(X)$ of a finite set of points $X = \left\{ {{x_1},...,{x_n}} \right\}$ is the minimal convex set containing all points in $X$. That is $Co(X) = \left\{ {\sum\nolimits_{i = 1}^n {{\alpha _i}{x_i}} \left| {{x_i} \in X,{\alpha _i} \in \Re ,} \right.{\alpha _i} \ge 0,\sum\nolimits_{i = 1}^n {{\alpha _i} = 1} } \right\}$.
\end{definition}

\section{Output Containment Control: An Offline and Non-Opimal Solution}\label{sec3}

In this section, the output containment control problem is first introduced, and some standard assumptions are listed. Then, a distributed dynamic output feedback control protocol is introduced for each follower that uses only the relative output measured data of the followers and the information which is broadcasted 
through the communication network by neighbors. The output containment control problem is then formulated into a linear cooperative output regulation problem and a non-optimal offline solution is provided.

Consider a set of agents with $n$ heterogeneous followers whose models are described as
\begin{align}
&{\dot x_i} = {A_i}{x_i} + {B_i}{u_i}, \label{eq:2} \\
&{y_i} = {C_i}{x_i} \label{eq:3}
\end{align} 
and a set of $m$ homogeneous leaders as
\begin{align}
&{\dot \omega _k} = S{\omega _k}, \label{eq:4} \\
&{y_k} = D{\omega _k} \label{eq:5}
\end{align} 
where $i \in \mathcal{F}$ is the set of followers, $k \in \mathcal{R}$ is the set of leaders, ${x_i} \in {\Re ^{{N_i}}}$ is the state of agent $i$, ${u_i} \in {\Re ^{{p_i}}}$ is its control input, and ${y_i} \in {\Re ^q}$ is its output, ${\omega _k} \in {\Re ^{\bar q}}$ is the state of leader $k$ and ${y_k} \in {\Re ^q}$ is its output.

The following standard assumptions are made on the dynamics of agents\cite{Huang2004,Huang2016}.

\begin{assumption}
The pairs $({A_i},{B_i})$ for $i \in \mathcal{F}$ are stabilizable.
\end{assumption}

\begin{assumption}
The pairs $({A_i},{C_i})$ for $i \in \mathcal{F}$ are detectable, and ${C_i}$ for $i \in \mathcal{F}$, are full row rank.
\end{assumption}

\begin{assumption}
The leaders' dynamics in (\ref{eq:4})-(\ref{eq:5}) are marginally stable, and $D$ is full row rank.
\end{assumption}

\begin{assumption}
The linear matrix equations
  \begin{align}
 &{\Pi _i}S = {A_i}{\Pi _i} + {B_i}{\Gamma _i}, \nonumber \\
 &D = {C_i}{\Pi _i} \label{eq:6}
 \end{align}  	                                                  
have unique solutions ${\Pi _i} \in {\Re ^{{N_i} \times {\bar q}}}$   and ${\Gamma _i} \in {\Re ^{{p_i} \times {\bar q}}}$ for all $i \in \mathcal{F}$.
\end{assumption}

\begin{assumption}
Each follower has access only to the relative output measured data between itself and its neighbors, and the information which is gotten by its neighbors, i.e., the absolute output measurements of followers are not available.
\end{assumption}

Assuming $C_i$ to be full row rank in Assumption 3, is a standard assumption for output feedback (See Gadewadikar et al.\cite{Gadewadikar07}) which is made to avoid redundant measurements. Furthermore, the transmission zeros condition (Huang\cite{Huang2004}, Assumption 1.4) is a standard assumption which is satisfied if $C_i$ is required to be full row rank. Therefore, Assumption 3 is a standard assumption. Regarding Assumption 4, it is worth to mention that, in containment control problem, the leaders' dynamic  cannot  be unstable, due to the fact that convex hull should be bounded, and it is not reasonable to have unbounded convex hull. Assumption 6 is made here to relax the need for the full or relative state measurements in the existing literature, which might not be feasible in some scenarios.

The output containment control problem is now defined as follows.

\begin{definition}
The multi-agent system (\ref{eq:2})-(\ref{eq:5}) achieves the output containment if the followers outputs, converge to the convex hull spanned by the leaders outputs. That is, 
\begin{align}
\mathop {lim}\limits_{t \to \infty } \,dist\left( {{y_i}\left( t \right),Co\left( {{y_k}\left( t \right),k \in \mathcal{R}} \right)} \right) = 0,\,\,\,\,\forall i \in \mathcal{F}. \label{eq:7}
\end{align} 
\end{definition}

Now, define the containment error of the follower $i$ as
\begin{align}
{e_i} = \sum\limits_{j = 1}^n {{a_{ij}}({y_i} - {y_j})}  + \sum\limits_{k = n + 1}^{n + m} {\delta _i^k({y_i} - {y_k})} \label{eq:8}
\end{align} 
which, in compact form, yields
\begin{align}
e = \left( {{L_1} \otimes {I_q}} \right){Y_F} - \sum\limits_{k = n + 1}^{n + m} {\left( {{\Delta ^k}{1_n} \otimes {y_k}} \right)} \label{eq:9}
\end{align} 
where $e$ and ${Y_F}$ are the stack column vectors of ${e_i}$ and ${y_i}$ for $i \in \mathcal{F}$, respectively, ${L_1}$ is defined in (\ref{eq:1}) and ${\Delta ^k}$ denotes the connection weight of leader $k$, which is a diagonal matrix with diagonal elements $\delta _i^k = {a_{ik}}$ where $i \in \mathcal{F}$ and $k \in \mathcal{R}$.

\begin{remark}
${\lim _{t \to \infty }}e(t) = 0$ with $e$ defined in (\ref{eq:9}) implies that the followers' outputs converge to the convex hull spanned by the leaders' outputs
\begin{align}
\sum\limits_{k = n + 1}^{n + m} {\left( {L_1^{ - 1}{\Delta ^k}{1_n} \otimes {y_k}} \right)} \label{eq:10}
\end{align} 
where ${L_1}$ is defined in (\ref{eq:1}). Note that, ${L_2}$ in (\ref{eq:1}) can be expressed as
\begin{align}
{L_2} =  - \left[ {{\Delta ^{n + 1}}{1_n},...,{\Delta ^{n + m}}{1_n}} \right] \label{eq:11}
\end{align} 
then, one has
\begin{align}
\sum\limits_{k = n + 1}^{n + m} {\left( {L_1^{ - 1}{\Delta ^k}{1_n} \otimes {y_k}} \right)}  =  - (L_1^{ - 1}{L_2} \otimes {I_q}){Y_R}. \label{eq:12}
\end{align}
where ${Y_R}$ is the stack column vector of ${y_k}$ for $k \in \mathcal{R}$.
\end{remark}

To solve the containment control problem in Definition 2, the following distributed dynamic measurement relative output feedback control protocol is introduced in this paper.

\begin{align}
{u_i} =& K_i^1{\xi _i} + K_i^2{\eta _i}, \label{eq:13} \\
{{\dot \xi }_i} =& {A_i}{\xi _i} + {B_i}{u_i} - {\mu _i}{F_i}(\sum\limits_{j = 1}^n {{a_{ij}}(({y_j} - {y_i}) + {{\hat y}_i} - {{\hat y}_j})}  + \sum\limits_{k = n + 1}^{n + m} {\delta _i^k(({y_k} - {y_i}) + {{\hat y}_i} - D{\omega _k}} ),  \label{eq:14} \\
{\dot \eta _i} =& S\,{\eta _i} + \beta (\sum\limits_{j = 1}^n {{a_{ij}}({\eta _j} - {\eta _i})}  + \sum\limits_{k = n + 1}^{n + m} {\delta _i^k({\omega _k} - {\eta _i})} ) \label{eq:15} 
\end{align}
for $i \in \mathcal{F}$ , where ${\xi _i} \in {\Re ^{{N_i}}}$ denotes the estimate of the  state  of  agent $i$, ${\hat y_i} = {C_i}{\xi _i}$ denotes the estimate of the output ${y_i}$, ${\eta _i} \in {\Re ^{\bar q}}$  denotes the estimate of the convex combination of the leaders' states. Moreover, ${\mu _i} > 0$ and ${F_i} \in {\Re ^{{N_i} \times q}}$ are coupling gains and gain matrices, respectively, to be designed in Lemma 3, and $\beta  > 0$ is a coupling gain. Finally, $K_i^1 \in {\Re ^{{p_i} \times {N_i}}}$ and $K_i^2 \in {\Re ^{{p_i} \times {\bar q}}}$ are design feedback and feed-forward gain matrices, respectively, for $i \in \mathcal{F}$.

Note that the (\ref{eq:15}) is a dynamic compensator which its compact form can be regarded as an observer for the convex combination of the leaders' states\cite{Haghshenas2015}, and (\ref{eq:14}) is a distributed observer for the state of $i$th follower, i.e., ${x_i}$. Note also that absolute output measurements of followers are not used in this control protocol. 

\begin{remark}
 Since the leaders act as command generators and are usually equipped with more powerful sensors, we reasonably assume that they know their own states and can broadcast them to their neighbors. This assumption is a standard assumption in cooperative control with relative measurements literature (For instant see (Zhang et al.\cite{Zhang2011}, Section IV) and (Wu et al.\cite{Wu2016}, Eqn. (6)). However, the followers are assumed to have not access to their absolute output measurements.
\end{remark}

\begin{remark} Note that the leaders are autonomous agents in the sense that they do not receive information from other agents and acts autonomously to guide other followers. That is, leaders are not being influenced by followers and, therefore, the observer design in (\ref{eq:14}) and as well as the control design in (\ref{eq:13}) are only designed for the followers to assure that they converge into a convex hull of the leaders' output trajectory, which is the ultimate goal of the output containment control problem.
\end{remark}

Consider the distributed state observer (\ref{eq:14}) and let the global state estimation error be
\begin{align}
\tilde X = X - \xi  \label{eq:16}
\end{align} 
where $X$ and $\xi $ are the stack column vectors of ${x_i}$ and ${\xi _i}$ , for $i \in \mathcal{F}$, respectively. Using (\ref{eq:14}) and (\ref{eq:2}), the dynamics of the global state estimation error yields
\begin{align}
\dot {\tilde X} = \left( {A - \mu F\left( {{L_1} \otimes {I_q}} \right)\bar C} \right)\tilde X  \label{eq:17}
\end{align} 
where $A = diag({A_1},...,{A_n})$, $\bar C = diag({C_1},...,{C_n})$, $\mu  = diag\allowbreak({I_{{N_1}}} \otimes {\mu _1},...,{I_{{N_n}}} \otimes {\mu _n})$, and $F = diag({F_1},...,{F_n})$. 

Before proceeding, the following technical results are required.

\begin{lemma}
Let  Assumptions 1 and 3 be satisfied. Let ${\lambda _i}$ be the $i$-th eigenvalue of ${L_1}$. Design the observer gain ${F_i}$ in (\ref{eq:14}) for $i \in \mathcal{F}$ as 
\begin{align}
{F_i} = {\Phi _i}C_i^TR_i^{ - 1}  \label{eq:18}
\end{align} 
where ${\Phi _i}$ is the unique positive definite solution of the observer ARE
\begin{align}
{A_i}{\Phi _i} + {\Phi _i}A_i^T + {E_i} - {\Phi _i}C_i^TR_i^{ - 1}{C_i}{\Phi _i} = 0 \label{eq:19}
\end{align} 
with ${E_i} = E_i^T \in {\Re ^{{N_i} \times {N_i}}}$ and ${R_i} = R_i^T \in {\Re ^{q \times q}}$  positive definite design matrices. Then, the global state estimation error dynamic (\ref{eq:17}) is asymptotically stable if the coupling gain satisfies
\begin{align}
{\mu _i} \ge \frac{1}{{2{\lambda _{i\min }}}} \label{eq:20}
\end{align} 
where ${\lambda _{i\min }} = {\min _{i \in {\bar N_i}}}{\mathop{\rm Re}\nolimits} ({\lambda _i})$.
\end{lemma}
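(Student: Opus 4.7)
The plan is to exploit Proposition~1 to relabel the followers so that $L_1$ becomes lower triangular, which reduces the stability analysis of \eqref{eq:17} to $n$ decoupled single-agent problems, each of which I would settle with a standard Kalman-filter-style Lyapunov argument built on the observer ARE \eqref{eq:19}.

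Under this relabeling the diagonal entries $\ell_{1,ii}$ are real and, by Lemma~1, strictly positive, and they are exactly the eigenvalues of $L_1$. Writing \eqref{eq:17} agent-wise produces
\begin{align*}
\dot{\tilde{x}}_i = (A_i - \mu_i \ell_{1,ii} F_i C_i)\tilde{x}_i - \mu_i F_i \sum_{j<i} \ell_{1,ij} C_j \tilde{x}_j ,
\end{align*}
a block lower triangular cascade whose spectrum is the union of the spectra of its diagonal blocks. Hence it suffices to show that each $A_i - \mu_i \ell_{1,ii} F_i C_i$ is Hurwitz.

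To handle one such block I would take $V_i(\tilde{x}_i) = \tilde{x}_i^T \Phi_i^{-1} \tilde{x}_i$; this is well defined because Assumption~3 together with $E_i,R_i\succ 0$ guarantees a unique $\Phi_i\succ 0$ solving \eqref{eq:19}. Substituting $F_i = \Phi_i C_i^T R_i^{-1}$ and pre/post-multiplying \eqref{eq:19} by $\Phi_i^{-1}$ to eliminate $\Phi_i^{-1}A_i + A_i^T\Phi_i^{-1}$ yields
\begin{align*}
\dot{V}_i = -\tilde{x}_i^T\bigl[(2\mu_i \ell_{1,ii}-1)\,C_i^T R_i^{-1} C_i + \Phi_i^{-1} E_i \Phi_i^{-1}\bigr]\tilde{x}_i ,
\end{align*}
which is strictly negative as soon as $\mu_i \ge 1/(2\ell_{1,ii})$, a condition implied by the stated bound $\mu_i \ge 1/(2\lambda_{i\min})$. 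The main bookkeeping obstacle is tracking that the triangularized diagonal entries really are the eigenvalues $\lambda_i$ appearing in \eqref{eq:20}; this is precisely what makes Proposition~1 indispensable, since without lower-triangularization the non-symmetry of $L_1$ combined with the heterogeneous, block-diagonal structure of $A$, $F$, and $\bar C$ obstructs any direct simultaneous diagonalization.
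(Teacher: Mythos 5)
Your proposal is correct and follows essentially the same route as the paper: the paper likewise invokes Proposition~1 (together with the procedure of Theorem~3 in Fax et al.) to reduce asymptotic stability of \eqref{eq:17} to the Hurwitzness of each $A_i - \mu_i\lambda_i F_iC_i$, and then defers the remaining step to the ARE-based Lyapunov argument of Theorem~1 in Zhang et al., which is exactly the $V_i=\tilde x_i^T\Phi_i^{-1}\tilde x_i$ computation you carry out. The only difference is that you spell out the details the paper delegates to citations.
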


\begin{proof}
One can show that, using Proposition 1 and the same procedure as  Theorem 3 in Fax et al.\cite{Fax2004}, if all the matrices ${A_i} - {\mu _i}{\lambda _i}{F_i}{C_i}$, $\forall i \in \mathcal{F}$ are Hurwitz, then, the global state estimation error dynamic (\ref{eq:17}) is asymptotically stable. Under Assumption 1 and based on Lemma 1, all the eigenvalues of  ${L_1}$ have positive real parts. The rest of proof is similar to that of Theorem 1 in Zhang et al.\cite{Zhang2011} and is omitted.
\end{proof}

\begin{remark} It is worth noting that, the requirement for acycle assumption in Lemma 2 can be obviated if all agents have identical dynamics (See Zhang et al.\cite{Zhang2011} for more details). 
\end{remark}

The composition of (\ref{eq:2})-(\ref{eq:5}), the containment error (\ref{eq:9}), and the control laws (\ref{eq:13})-(\ref{eq:15}) result in the following closed-loop system
\begin{align}
&{{\dot x}_c} = {A_c}{x_c} + {B_c}\Omega \label{eq:21} \\
&\dot \Omega  = \bar S\Omega  \label{eq:22} \\
& e = {C_c}{x_c} + {D_c}\Omega \label{eq:23}
\end{align} 
where
\begin{align}
{{A_c} = \left[ {\begin{array}{*{20}{c}}
A&{\bar B{K_1}}&{\bar B{K_2}}\\
{{A_{c21}}}&{{A_{c22}}}&{\bar B{K_2}}\\
{{0_{n\bar q \times N}}}&{{0_{n\bar q \times N}}}&{{A_{c33}}}
\end{array}} \right],{B_c} = \left[ {\begin{array}{*{20}{c}}
{{0_{N \times m\bar q}}}\\
{{0_{N \times m\bar q}}}\\
{ - ({L_2} \otimes \beta {I_{\bar q}})}
\end{array}} \right],{C_c} = \left[ {\begin{array}{*{20}{c}}
{({L_1} \otimes {I_q})\bar C}&{{0_{nq \times N}}}&{{0_{n\bar q \times n\bar q}}}
\end{array}} \right],{D_c} = {L_2} \otimes D}, \label{eq:24}
\end{align}
${x_c} = {\left[ {\begin{array}{*{20}{c}}
{{X^T}}&{{\xi ^T}}&{{\eta ^T}}
\end{array}} \right]^T}$ is the closed-loop state, $\eta $ is the stack column vectors of ${\eta _i}$, $\Omega $ is the stack column vectors of ${\omega _k}$, ${A_{c21}}={\mu F({L_1} \otimes {I_q})\bar C}$, ${A_{c22}}={A + \bar B{K_1} - \mu F({L_1} \otimes {I_q})\bar C}$, ${A_{c33}}={\bar {\bar S} - ({L_1} \otimes \beta {I_{\bar q}})}$ , $\bar S = {I_m} \otimes S$, $\bar {\bar S} = {I_n} \otimes S$, $A = diag({A_1},...,{A_n})$, $\bar B = diag({B_1},\allowbreak...,{B_n})$, $\bar C = diag({C_1},...,{C_n})$, $N = \sum\limits_{i = 1}^n {{N_i}}$, ${K_1} = diag(K_1^1,...,K_n^1)$ and ${K_2} = diag(K_1^2,...,K_n^2)$.

Now, by using (\ref{eq:21})-(\ref{eq:24}), we are ready to describe the output containment control problem, defined in Definition 2, as a cooperative output regulation problem as follows.

\begin{problem} [Cooperative output regulation problem]
Given the multi-agent system (\ref{eq:2})-(\ref{eq:5}) and digraph $\mathcal{G}$, find the control protocol of the form (\ref{eq:13}) such that the closed-loop system (\ref{eq:21})-(\ref{eq:23})  has the following properties:

\begin{description}
\item[Property 1] The origin of the closed-loop system (\ref{eq:21})-(\ref{eq:23}) with $\Omega $ being set to zero is asymptotically stable, i.e., the matrix ${A_c}$ in (\ref{eq:24}) is Hurwitz. 
\item[Property 2]  For any initial conditions ${x_i}(0)$, ${\xi _i}(0)$, ${\eta _i}(0)$, $i \in \mathcal{F}$ and ${\omega _k}(0)$, $k \in \mathcal{R}$, $\mathop {\lim }\limits_{t \to \infty } e\left( t \right) = 0$.
\end{description}

\end{problem}

\begin{remark}
The Property 1 indicates that all solutions of the closed-loop system (\ref{eq:21})-(\ref{eq:23}) forget their initial conditions and converge to zero when the exosystems (i.e., leaders) are disconnected. Moreover, Properties 1 and 2 together indicate that all solutions of the closed-loop system (\ref{eq:21})-(\ref{eq:23}) forget their initial conditions and converge to some solutions, depending only on the exosignals (i.e., ${\omega _k}$ for $k \in \mathcal{R}$). (See (Haghshenas et al.\cite{Haghshenas2015}, Problem 3) and (Huang\cite{Huang2016}, Problem 1 and 2) for more details)
\end{remark}

\begin{remark} Note that, the error given in (\ref{eq:23}) is actually the output containment error of (\ref{eq:9}). Therefore, based on Remark 1, by solving Problem 1, the output containment control problem, described in Definition 2, is also solved. 
\end{remark}

In order to solve Problem 1, we present following lemma on the closed-loop system (\ref{eq:21})-(\ref{eq:23}).

\begin{lemma}
Suppose that the closed-loop system (\ref{eq:21})-(\ref{eq:23}) satisfies Property 1, under the distributed control laws (\ref{eq:13})-(\ref{eq:15}). Then, $\mathop {\lim }\limits_{t \to \infty } e(t) = 0$, if there exists a matrix ${X_c}$ that satisfies the following linear matrix equations:
\begin{align}
\left\{ \begin{array}{l}
{A_c}{X_c} + B_c = {X_c}\bar S\\
{C_c}{X_c} + D_c = {0_{nq \times m\bar q}}
\end{array} \right.  \label{eq:25}
\end{align} 
\end{lemma}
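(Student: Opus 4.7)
The plan is the classical Francis-style coordinate-shift argument for the output regulator equations. I would first introduce the shifted closed-loop state $\tilde{x}_c = x_c - X_c \Omega$, which is a linear change of coordinates parametrized by $\Omega$, and then show that (i) $\tilde{x}_c$ obeys an autonomous Hurwitz dynamics, and (ii) the containment error $e$ depends only on $\tilde{x}_c$ through $C_c$.

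For step (i), I would differentiate: using (\ref{eq:21}) and (\ref{eq:22}), $\dot{\tilde{x}}_c = A_c x_c + B_c \Omega - X_c \bar S \Omega$. Substituting $x_c = \tilde{x}_c + X_c \Omega$ and grouping terms in $\Omega$ gives $\dot{\tilde{x}}_c = A_c \tilde{x}_c + (A_c X_c + B_c - X_c \bar S)\Omega$, and the first equation in (\ref{eq:25}) kills the $\Omega$ term, leaving $\dot{\tilde{x}}_c = A_c \tilde{x}_c$. By Property 1, $A_c$ is Hurwitz, so $\tilde{x}_c(t) \to 0$ as $t \to \infty$ for any initial condition. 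For step (ii), I would rewrite (\ref{eq:23}) as $e = C_c \tilde{x}_c + (C_c X_c + D_c)\Omega$, and the second equation in (\ref{eq:25}) forces the bracketed coefficient to vanish, so $e = C_c \tilde{x}_c$. Combining the two steps yields $\lim_{t\to\infty} e(t) = 0$, which is the claim.

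There is really no significant obstacle here — the lemma is a direct translation of the standard output-regulation principle to the containment setting, once the problem has been cast in the form (\ref{eq:21})--(\ref{eq:23}) and (\ref{eq:25}) is postulated. The only thing to be careful about is that $\Omega(t)$ need not decay (the leaders are only marginally stable by Assumption 4), but this does not enter the argument because $\Omega$ drops out of both $\dot{\tilde{x}}_c$ and $e$ exactly by virtue of the two regulator equations; boundedness of $\Omega$ is therefore not required for $e \to 0$, only for the interpretation that the followers' outputs track the convex hull of the leaders' outputs, which was already established in Remark 1. Hence the whole proof reduces to writing out the two short computations above and invoking Property 1.
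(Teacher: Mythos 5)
Your proposal is correct and follows exactly the same Francis-style coordinate-shift argument as the paper: define $\tilde{x}_c = x_c - X_c\Omega$, use the first equation of (\ref{eq:25}) to obtain $\dot{\tilde{x}}_c = A_c\tilde{x}_c$, use the second to get $e = C_c\tilde{x}_c$, and invoke the Hurwitz property of $A_c$. You simply spell out the intermediate computations that the paper leaves as "some manipulation," so there is nothing further to add.
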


\begin{proof}
Let ${\tilde x_c} = {x_c} - {X_c} \Omega $. Using (\ref{eq:25}), one has 

\begin{align}
\begin{array}{l}
{{\dot{\tilde x}}_c} = {A_c}{{\tilde x}_c}
\end{array} \label{eq:26}
\end{align}

Since ${A_c}$ is Hurwitz, $\mathop {\lim }\limits_{t \to \infty } {\tilde x_c}(t) = 0$. Using (\ref{eq:25}) and some manipulation, (\ref{eq:23}) can be rewritten as
\begin{align}
\begin{array}{l}
e = {C_c}{{\tilde x}_c}
\end{array}  \label{eq:27}
\end{align}
then, it follows from $\mathop {\lim }\limits_{t \to \infty } {\tilde x_c}(t) = 0$, that
$\mathop {\lim }\limits_{t \to \infty } e(t) = 0$.
\end{proof}

Now the following theorem shows that Problem 1 can be solved using the distributed control laws (\ref{eq:13})-(\ref{eq:15}).

\begin{theorem} \label{thm1}
Consider the multi-agent system (\ref{eq:2})-(\ref{eq:5}). Let Assumptions 1 and 5 be satisfied and ${\mu _i} $  and ${F_i}$ in the control protocol (\ref{eq:13}) be designed as of Lemma 2.  Design $K_i^1$ such that ${A_i} + {B_i}K_i^1$ is Hurwitz and $K_i^2$ using
\begin{align}
K_i^2 = {\Gamma _i} - K_i^1{\Pi _i}.  \label{eq:28}
\end{align} 
where ${\Gamma _i}$ and ${\Pi _i}$ are solutions of (\ref{eq:6}). Then, Problem 1 is solved using the distributed control laws (\ref{eq:13})-(\ref{eq:15}), for any positive constant $\beta $.
\end{theorem}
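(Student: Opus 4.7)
The plan is to verify Properties 1 and 2 of Problem 1 separately. For Property 1 I apply an invertible similarity transformation that decouples the observer-error dynamics (\ref{eq:17}) from the remaining dynamics, reducing $A_c$ to a block-triangular form whose three diagonal blocks can be analysed individually. For Property 2 I invoke Lemma 3 and exhibit an explicit $X_c$ satisfying the regulator equations (\ref{eq:25}), built from the matrices $\Pi_i,\Gamma_i$ of Assumption 5 and the non-negative ``leader weights'' $-L_1^{-1}L_2$ of Lemma 1.

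For Property 1, I change coordinates via $\tilde X=X-\xi$ using
\[
T=\begin{bmatrix} I & -I & 0 \\ 0 & I & 0 \\ 0 & 0 & I \end{bmatrix}.
\]
A direct block multiplication shows that $TA_cT^{-1}$ is block triangular with diagonal blocks $A-\mu F(L_1\otimes I_q)\bar C$, $A+\bar B K_1=\mathrm{diag}(A_i+B_iK_i^1)$, and $\bar{\bar S}-\beta(L_1\otimes I_{\bar q})$. The first is Hurwitz by Lemma 2 and the second by the assumed choice of $K_i^1$. For the third, Proposition 1 allows us to triangularise $L_1$, and since $I_n\otimes S$ commutes with $L_1\otimes I_{\bar q}$ its spectrum consists of the numbers $\lambda^S-\beta\lambda^{L_1}$; marginal stability of $S$ (Assumption 4) gives $\mathrm{Re}(\lambda^S)\le 0$ while Lemma 1 gives $\mathrm{Re}(\lambda^{L_1})>0$, so every eigenvalue has strictly negative real part for any $\beta>0$. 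Hence $A_c$ is Hurwitz.

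For Property 2, I would try the ansatz
\[
X_c=\begin{bmatrix} \Pi(-L_1^{-1}L_2\otimes I_{\bar q}) \\ \Pi(-L_1^{-1}L_2\otimes I_{\bar q}) \\ -L_1^{-1}L_2\otimes I_{\bar q} \end{bmatrix},\qquad \Pi=\mathrm{diag}(\Pi_1,\dots,\Pi_n).
\]
The third block row of $A_cX_c+B_c=X_c\bar S$ collapses to a Kronecker identity via $(A\otimes B)(C\otimes D)=AC\otimes BD$, with the two $\beta(L_2\otimes I_{\bar q})$ contributions cancelling. The first block row, after substituting (\ref{eq:28}), reduces block by block to $A_i\Pi_i+B_i\Gamma_i=\Pi_i S$, which is exactly Assumption 5. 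The second block row coincides with the first because $X_{c1}=X_{c2}$ and $A_{c21}+A_{c22}=A+\bar B K_1$. Finally $C_cX_c+D_c=0$ reduces, using $\bar C\Pi=I_n\otimes D$ (i.e.\ $C_i\Pi_i=D$) and $(L_1\otimes I_q)(-L_1^{-1}L_2\otimes D)=-L_2\otimes D$, to the identity $-L_2\otimes D+L_2\otimes D=0$. Lemma 3 then yields $\lim_{t\to\infty}e(t)=0$, so Property 2 holds.

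The main obstacle I expect is the third diagonal block $\bar{\bar S}-\beta(L_1\otimes I_{\bar q})$: marginal stability of $S$ delivers only $\mathrm{Re}(\lambda^S)\le 0$, so the strict negativity required for Hurwitzness has to come entirely from the coupling term $\beta\,\mathrm{Re}(\lambda^{L_1})>0$. Triangularising $L_1$ through Proposition 1 sidesteps any worry about non-diagonalisability and makes the spectral count clean. Everything else is essentially bookkeeping with the Kronecker identities.
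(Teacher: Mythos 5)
Your proof is correct and follows essentially the same route as the paper's: a similarity transformation exposing the three diagonal blocks $A-\mu F(L_1\otimes I_q)\bar C$, $A+\bar BK_1$, and $\bar{\bar S}-\beta(L_1\otimes I_{\bar q})$ for Property~1, and an explicit $X_c$ verified against the regulator equations of Lemma~3 for Property~2. Your only (welcome) deviation is writing the first two blocks of $X_c$ as $-\Pi(L_1^{-1}L_2\otimes I_{\bar q})$ instead of the paper's $-\bar C^{-1}(L_1^{-1}L_2\otimes D)$, which, since $\bar C\Pi=I_n\otimes D$, is the same object but avoids inverting the non-square matrix $\bar C$.
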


\begin{proof}
For nonsingular matrix 
\begin{align}
T = \left[ {\begin{array}{*{20}{c}}
{{I_N}}&{{0_{N \times N}}}&{{0_{N \times n{\bar q}}}}\\
{ {I_N}}&{{I_N}}&{{0_{N \times n{\bar q}}}}\\
{{0_{n{\bar q} \times N}}}&{{0_{n{\bar q} \times N}}}&{{I_{n{\bar q}}}}
\end{array}} \right], \nonumber
\end{align}
one can verify that
\begin{align}
\begin{array}{l}
{T^{ - 1}}{A_c}T = \left[ {\begin{array}{*{20}{c}}
{A + \bar B{K_1}}&{\bar B{K_1}}&{\bar B{K_2}}\\
{{0_{N \times N}}}&{A - \mu F({L_1} \otimes {I_q})\bar C}&{{0_{N \times n\bar q}}}\\
{{0_{n\bar q \times N}}}&{{0_{n\bar q \times N}}}&{\bar {\bar S} - ({L_1} \otimes \beta {I_{\bar q}})}
\end{array}} \right] \label{eq:29}
\end{array}
\end{align}
which has the block-triangular structure. Under Assumption 2, there exists $K_i^1$, $\forall i \in \mathcal{F}$ such that ${A_i} + {B_i}K_i^1$ is Hurwitz. Based on Lemma 2, $A - \mu F({L_1} \otimes {I_q})\bar C$ is Hurwitz, if positive constants ${\mu _i}$ and ${F_i}$, $\forall i \in \mathcal{F}$ are chosen as (\ref{eq:20}) and (\ref{eq:18}), respectively. Based on Lemma 1, under Assumption 1, all the eigenvalues of ${L_1}$ have positive real parts. Therefore, under Assumption 4, one can see that $\bar {{\bar S}} - ({L_1} \otimes \beta {I_{\bar q}})$ is Hurwitz for any positive constant $\beta$. Thus, under the distributed control laws (\ref{eq:13})-(\ref{eq:15}), Property 1 in Problem 1 is satisfied.

Next we verify the Property 2 in Problem 1. Let $K_i^2$, $\forall i \in \mathcal{F}$ be given by (\ref{eq:28}). Then, under Assumption 5, one has
\begin{align}
(A + \bar B{K_1})\Pi  + \bar B{K_2} = \Pi \left( {{I_n} \otimes S} \right) \label{eq:30}
\end{align} 
where $\Pi  = diag({\Pi _i})$ for $i \in \mathcal{F}$, Let
\begin{align}
{X_c}\buildrel \Delta \over = \left[ {\begin{array}{*{20}{c}}
{ - {{\bar C}^{ - 1}}(L_1^{ - 1}{L_2} \otimes D)}\\
{ - {{\bar C}^{ - 1}}(L_1^{ - 1}{L_2} \otimes D)}\\
{ - (L_1^{ - 1}{L_2} \otimes {I_{\bar q}})}
\end{array}} \right]. \label{eq:31}
\end{align}

Then, using (\ref{eq:21})-(\ref{eq:23}) and (\ref{eq:30}),  yields
\begin{align}
\begin{array}{l}
{A_c}{X_c} + {B_c} = - \left[ {\begin{array}{*{20}{c}}
{(A + \bar B{K_1}){{\bar C}^{ - 1}}(L_1^{ - 1}{L_2} \otimes D) + \bar B{K_2}(L_1^{ - 1}{L_2} \otimes {I_{\bar q}})}\\
{(A + \bar B{K_1}){{\bar C}^{ - 1}}(L_1^{ - 1}{L_2} \otimes D) + \bar B{K_2}(L_1^{ - 1}{L_2} \otimes {I_{\bar q}})}\\
{(\bar {\bar S} - ({L_1} \otimes \beta {I_{\bar q}}))(L_1^{ - 1}{L_2} \otimes {I_{\bar q}}) + ({L_2} \otimes \beta {I_{\bar q}})}
\end{array}} \right]\\
 \quad \quad \quad \quad \,\,\,\,=  - \left[ {\begin{array}{*{20}{c}}
{{{\bar C}^{ - 1}}(L_1^{ - 1}{L_2} \otimes D)\bar S}\\
{{{\bar C}^{ - 1}}(L_1^{ - 1}{L_2} \otimes D)\bar S}\\
{(L_1^{ - 1}{L_2} \otimes {I_{\bar q}})\bar S}
\end{array}} \right] = {X_c}\bar S. \label{eq:32}
\end{array}
\end{align}

Note that
\begin{align}
\bar {\bar S}(L_1^{ - 1}{L_2} \otimes {I_{\bar q}}) &= ({I_n} \otimes S)(L_1^{ - 1}{L_2} \otimes {I_{\bar q}})\nonumber \\ 
 &= (L_1^{ - 1}{L_2} \otimes {I_{\bar q}})\bar S.  \label{eq:33}
\end{align} 

Moreover, 
\begin{align}
\begin{array}{l}
{C_c}{X_c} + {D_c} =  - ({L_1} \otimes {I_q})\bar C{{\bar C}^{ - 1}}(L_1^{ - 1}{L_2} \otimes D) + ({L_2} \otimes D)\\
 \qquad \qquad \,\,\,\,\,= 0.
\end{array} \label{eq:34}
\end{align} 

Hence, ${X_c}$ satisfies the linear matrix equations (\ref{eq:25}) and it follows from Lemma 3 that property 2 is also satisfied, i.e., $\mathop {\lim }\limits_{t \to \infty } \,\,e(t) = 0$. This completes the proof.
\end{proof}

\begin{remark}
 It is seen from (\ref{eq:31}) in the proof of Theorem 1 that, Assumption 1-5 guarantees the feasibility of the solution in Lemma 3, i.e., ${X_c}$.
\end{remark}

\begin{remark} 
The graph condition in Haghshenas et al.\cite{Haghshenas2015} and Zuo et al.\cite{Zuo2017} is unnecessarily strong. In this paper, by defining the output containment error as (\ref{eq:9}), the restrictive required assumption of strongly connected communication graph in (Haghshenas et al.\cite{Haghshenas2015}, Assumption 6) and (Zuo et al.\cite{Zuo2017}, Assumption 1) is relaxed to a milder assumption of communication graph with spanning forest. 
\end{remark}

\begin{remark} 
 The solutions to the output regulator equations (\ref{eq:6}) as well as the distributed observer (\ref{eq:15}) need complete knowledge of leaders' dynamics. This knowledge, however, is not available to the followers in many applications. In order to obviate the requirement of the leaders' dynamics in (\ref{eq:15}), a distributed adaptive observer is designed for convex combination of the leaders' states in Section 4.  Optimality is next implicitly incorporated in the design of the containment control problem in Section 5 to optimize the transient containment errors of agents. An off-policy RL algorithm is developed to solve the optimal output containment control problem online in real time and without requiring the knowledge of the leaders' dynamics.
\end{remark}

\section{Distributed Adaptive Observer for Leaders Convex Hull}\label{sec4}

In this section, a distributed adaptive observer is developed to estimate convex combination of the leaders' states and outputs for each follower, as well as the leaders' dynamics, simultaneously.
To estimate the convex combination of the leaders' states and outputs for each follower, consider the following distributed adaptive observer 
\begin{align}
\dot {\hat {\eta _i}} = {S_i}{\hat \eta _i} + {\beta _2}(\sum\limits_{j = 1}^n {{a_{ij}}({{\hat \eta }_j} - {{\hat \eta }_i})}  + \sum\limits_{k = n + 1}^{n + m} {\delta _i^k({\omega _k} - {{\hat \eta }_i})} )  \label{eq:35}
\end{align}
where
\begin{align}
&{\dot S_i} = {\beta _1}(\sum\limits_{j = 1}^n {{a_{ij}}({S_j} - {S_i})}  + \sum\limits_{k = n + 1}^{n + m} {\delta _i^k(S - {S_i})} )  \label{eq:36} \\
&{\dot D_i} = {\beta _3}(\sum\limits_{j = 1}^n {{a_{ij}}({D_j} - {D_i})}  + \sum\limits_{k = n + 1}^{n + m} {\delta _i^k(D - {D_i})})  \label{eq:37} \\
&{y_{0i}} = {D_i}{\hat \eta _i} \label{eq:38}
\end{align} 
where ${S_i} \in {\Re ^{{\bar q} \times {\bar q}}}$ and ${D_i} \in {\Re ^{q \times {\bar q}}}$ , for $i \in \mathcal{F}$, are the estimation of the leaders' dynamics $S$ and $D$ respectively, ${\hat \eta _i} \in {\Re ^{\bar q}}$ and ${y_{0i}} \in {\Re ^q}$ are the estimation of convex combination of the leaders' states and outputs for $i$th follower, respectively, and ${\beta _1},{\beta _2}, {\beta _3} > 0$.  The following lemma is used in the proof of Theorem 2 for the observer design. 

\begin{lemma} [Cai et al.\cite{Cai2017}]
Consider the following system
\begin{align}
\dot \upsilon  = \varepsilon {\rm Z}\upsilon  + {{\rm Z}_1}(t)\upsilon  + {{\rm Z}_2}(t) \label{eq:39}
\end{align} 
where $\upsilon  \in {\Re ^{\bar q}}$, ${\rm Z} \in {\Re ^{{\bar q} \times {\bar q}}}$ is asymptotically stable, $\varepsilon $ is a positive constant, ${{\rm Z}_1}(t) \in {\Re ^{{\bar q} \times {\bar q}}}$ and ${{\rm Z}_2}(t) \in {\Re ^{\bar q}}$ are bounded and continuous for all $t \ge {t_o}$. If ${{\rm Z}_1}(t)$ and ${{\rm Z}_2}(t)$ decay to zero exponentially as time go to infinity, then, for any $\upsilon ({t_0})$ and any $\varepsilon  > 0$, $\upsilon (t)$ decays to zero exponentially as time go to infinity.
\end{lemma}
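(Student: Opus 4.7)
The plan is to combine a Lyapunov argument for the homogeneous part of the dynamics with a variation-of-constants estimate for the exogenous forcing. Since $Z$ is Hurwitz and $\varepsilon > 0$, the matrix $\varepsilon Z$ is Hurwitz, so there exist symmetric positive definite matrices $P$ and $Q$ solving the Lyapunov equation $(\varepsilon Z)^T P + P(\varepsilon Z) = -Q$. Taking $V(\upsilon) = \upsilon^T P \upsilon$ and differentiating along trajectories of (\ref{eq:39}) yields
\begin{equation*}
\dot V \le -\lambda_{\min}(Q)\|\upsilon\|^2 + 2\|P\|\,\|Z_1(t)\|\,\|\upsilon\|^2 + 2\|P\|\,\|Z_2(t)\|\,\|\upsilon\|.
\end{equation*}

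By hypothesis $\|Z_1(t)\|$ decays to zero exponentially, so there exists $T \ge t_0$ such that for all $t \ge T$ one has $2\|P\|\,\|Z_1(t)\| \le \tfrac{1}{2}\lambda_{\min}(Q)$. On $[T,\infty)$, using Young's inequality to absorb the cross term $2\|P\|\,\|Z_2(t)\|\,\|\upsilon\|$ into a quarter of the quadratic term plus $C\|Z_2(t)\|^2$, and using $\|\upsilon\|^2 \ge V/\lambda_{\max}(P)$, one arrives at a scalar differential inequality of the form
\begin{equation*}
\dot V \le -c V + \rho(t),
\end{equation*}
where $c > 0$ and $\rho(t)$ decays exponentially. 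A standard comparison lemma (Gronwall / variation of constants) then delivers exponential decay of $V(t)$, and hence of $\|\upsilon(t)\|$, on $[T,\infty)$, with rate essentially equal to $\min(c,\text{decay rate of }\rho)$.

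To complete the argument on the initial interval $[t_0,T]$, note that $Z_1$ and $Z_2$ are continuous and bounded on this compact set, so $\upsilon$ satisfies a linear time-varying ODE with bounded coefficients and bounded forcing; Gronwall's inequality gives a finite bound on $\|\upsilon(T)\|$ depending only on $\|\upsilon(t_0)\|$, $T$, and the sup norms of $Z_1, Z_2$ on $[t_0,T]$. Using this value as the initial condition for the exponentially decaying bound from the previous paragraph extends the exponential estimate to all $t \ge t_0$.

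The main obstacle is the interplay between the time-varying perturbation $Z_1(t)$ and the Lyapunov stability of $\varepsilon Z$: one cannot directly invoke a single quadratic Lyapunov function with a uniformly negative derivative, because $Z_1(t)$ may be large on the initial transient. The two-stage splitting above circumvents this, but one must be careful to track constants so that the final exponential rate is strictly positive and independent of $\upsilon(t_0)$, and to verify that the bound holds for \emph{any} positive $\varepsilon$ (which follows because $(\varepsilon Z)^T P_\varepsilon + P_\varepsilon(\varepsilon Z) = -I$ admits $P_\varepsilon = \varepsilon^{-1} P_1$, so the argument goes through with modified, but finite, constants for every $\varepsilon>0$).
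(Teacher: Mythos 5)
Your proof is correct. Note first that the paper does not prove this lemma at all: it is imported verbatim from Cai et al.\ (reference \cite{Cai2017}) and used as a black box in the proof of Theorem 2, so there is no in-paper argument to compare against. Your two-stage Lyapunov argument is a clean, self-contained substitute. The decomposition is sound: solving $(\varepsilon Z)^T P + P(\varepsilon Z) = -Q$, waiting until $2\|P\|\|Z_1(t)\| \le \tfrac{1}{2}\lambda_{\min}(Q)$, absorbing the $Z_2$ cross term by Young's inequality, and then invoking the comparison lemma for $\dot V \le -cV + \rho(t)$ with $\rho$ exponentially decaying gives exponential decay of $V$ (and hence of $\|\upsilon\|$) on $[T,\infty)$, while Gronwall on the compact initial interval supplies a finite value of $\|\upsilon(T)\|$. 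The original proof in Cai et al.\ instead works with the state transition matrix of the homogeneous time-varying part $\dot\upsilon = (\varepsilon Z + Z_1(t))\upsilon$, establishing its exponential stability as a vanishing perturbation of the exponentially stable $\varepsilon Z$, and then bounding the variation-of-constants convolution with the exponentially decaying forcing $Z_2$; the two routes are essentially equivalent in content, though yours avoids having to quote a separate perturbation theorem for time-varying systems. One cosmetic point: your claimed rate $\min(c,\text{decay rate of }\rho)$ degenerates to $t e^{-ct}$ when the two rates coincide, which is still exponential at any strictly smaller rate, so the conclusion is unaffected.
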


\begin{theorem} \label{thm2}
Consider the leader dynamics (\ref{eq:4})-(\ref{eq:5}), and the adaptive observer (\ref{eq:35})-(\ref{eq:37}). Let ${\tilde S_i} = {S_i} - S$,  ${\tilde D_i} = {D_i} - D$ be leaders' dynamics estimation errors and ${\tilde \eta _i} = {\hat \eta _i} - \omega _i^*$ and ${\tilde y_{0i}}(t) = {D_i}{\hat \eta _i} - D\omega _i^*$ be $i$th follower state estimation error of the convex combination of the leaders' states and outputs, respectively, where $\omega _i^*$ is $ - ({\left[ {L_1^{ - 1}{L_2}} \right]_i} \otimes {I_{\bar q}})\omega$ with $\omega  = col({\omega _1},...,{\omega _m})$. Then, for any initial conditions ${\tilde S_i}(0)$, ${\tilde D_i}(0)$, ${\tilde y_{0i}}(0)$ and ${\tilde \eta _i}(0)$, one obtains

\begin{enumerate}[1.]
\item for any positive constant ${\beta _1}$, $\forall i \in \mathcal{F}$, ${\lim _{t \to \infty }}{\tilde S_i}(t) = 0$  exponentially; 

\item for any positive constant ${\beta _3}$, $\forall i \in \mathcal{F}$, ${\lim _{t \to \infty }}{\tilde D_i}(t) = 0$ exponentially; 

\item for any positive constant ${\beta _1}$ and  ${\beta _2}$, $\forall i \in \mathcal{F}$, ${\lim _{t \to \infty }}{\tilde \eta _i}(t) = 0$ exponentially;

\item for any positive constant ${\beta _1}$, ${\beta _2}$ and ${\beta _3}$ , $\forall i \in \mathcal{F}$, ${\lim _{t \to \infty }}{\tilde y_{0i}}(t) = 0$ exponentially.
\end{enumerate}
\end{theorem}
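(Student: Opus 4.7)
The plan is to dispose of parts 1 and 2 by a routine Laplacian consensus argument, to derive error dynamics for part 3 that fit the hypotheses of Lemma 4, and finally to obtain part 4 as an algebraic consequence of the preceding three.

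For parts 1 and 2, since $S$ and $D$ are constant, $\dot{\tilde S}_i = \dot S_i$ and $\dot{\tilde D}_i = \dot D_i$. Substituting $S_j - S_i = \tilde S_j - \tilde S_i$ and $S - S_i = -\tilde S_i$ into (\ref{eq:36}) and stacking the vectorised errors $\tilde{\mathbf S} = col(Vec(\tilde S_1),\ldots,Vec(\tilde S_n))$ collapses the dynamics into the Laplacian form
\[
\dot{\tilde{\mathbf S}} = -\beta_1 (L_1 \otimes I_{\bar q^{2}})\,\tilde{\mathbf S}.
\]
By Lemma 1 every eigenvalue of $L_1$ has positive real part, so $-\beta_1 L_1$ is Hurwitz for every $\beta_1 > 0$ and $\tilde S_i(t) \to 0$ exponentially. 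Applying the same manipulation to (\ref{eq:37}) gives $\tilde D_i(t) \to 0$ exponentially.

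Part 3 is the substantive step. First I compute $\dot\omega_i^{*}$: writing $\omega_i^{*} = -([L_1^{-1}L_2]_i \otimes I_{\bar q})\omega$ and using $\dot\omega = (I_m \otimes S)\omega$ yields $\dot\omega_i^{*} = S\omega_i^{*}$. The defining relation for $\omega^{*}$ also provides the Laplacian cancellation identity $(L_1 \otimes I_{\bar q})\omega^{*} + (L_2 \otimes I_{\bar q})\omega = 0$, i.e.\ for each $i$,
\[
\sum_{j=1}^n a_{ij}(\omega_j^{*} - \omega_i^{*}) + \sum_{k=n+1}^{n+m} \delta_i^k (\omega_k - \omega_i^{*}) = 0.
\]
Substituting $\hat\eta_i = \tilde\eta_i + \omega_i^{*}$ into (\ref{eq:35}) and subtracting $\dot\omega_i^{*}$ annihilates the drift and yields
\[
\dot{\tilde\eta}_i = S\tilde\eta_i + \tilde S_i\,\hat\eta_i + \beta_2\!\left(\sum_j a_{ij}(\tilde\eta_j - \tilde\eta_i) - \sum_k \delta_i^k \tilde\eta_i\right),
\]
whose stacked form reads
\[
\dot{\tilde\eta} = \bigl(I_n \otimes S - \beta_2 L_1 \otimes I_{\bar q}\bigr)\tilde\eta + diag(\tilde S_i)\,\tilde\eta + diag(\tilde S_i)\,\omega^{*}.
\]
I then check that $Z := I_n \otimes S - \beta_2 L_1 \otimes I_{\bar q}$ is Hurwitz: passing $L_1$ to Jordan form, the spectrum of $Z$ is $\{\mu - \beta_2\lambda : \mu \in \sigma(S),\ \lambda \in \sigma(L_1)\}$; Assumption 4 gives $\mathrm{Re}(\mu)\le 0$ and Lemma 1 gives $\mathrm{Re}(\lambda) > 0$, so $\mathrm{Re}(\mu - \beta_2\lambda) < 0$ for every $\beta_2 > 0$. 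Boundedness of $\omega^{*}$ (from marginal stability of $S$) together with the exponential decay of $\tilde S_i$ established in part 1 then makes $diag(\tilde S_i)$ and $diag(\tilde S_i)\omega^{*}$ bounded, continuous, and exponentially decaying, so Lemma 4 concludes $\tilde\eta_i(t) \to 0$ exponentially.

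Part 4 is then immediate: $\tilde y_{0i} = D_i\hat\eta_i - D\omega_i^{*} = D\tilde\eta_i + \tilde D_i\,\tilde\eta_i + \tilde D_i\,\omega_i^{*}$, and each term decays exponentially by parts 2 and 3 together with the boundedness of $\omega_i^{*}$. I expect the main obstacle to be part 3: isolating the Laplacian cancellation identity so that only $\tilde S_i$-driven perturbations survive, and verifying that $Z$ is Hurwitz for \emph{every} $\beta_2 > 0$, which rests on the somewhat delicate pairing of the marginal stability of $S$ with the strict positivity of the real parts of $\sigma(L_1)$. Once these two facts are in hand Lemma 4 packages everything into the desired exponential decay.
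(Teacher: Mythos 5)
Your proposal is correct and follows essentially the same route as the paper: Laplacian error dynamics plus Lemma 1 for parts 1 and 2, the perturbed linear system $\dot{\tilde\eta} = (I_n\otimes S - \beta_2 L_1\otimes I_{\bar q})\tilde\eta + \tilde S_d\tilde\eta + \tilde S_d\omega^*$ combined with Lemma 4 for part 3, and the algebraic decomposition of $\tilde y_{0i}$ for part 4. The only difference is that you spell out the spectral argument for why $I_n\otimes S - \beta_2(L_1\otimes I_{\bar q})$ is Hurwitz and the Laplacian cancellation identity, both of which the paper asserts without detail.
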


\begin{proof}
The dynamics of the global leaders' dynamics estimation error can be written as
\begin{align}
\dot {\tilde {S}} =  - {\beta _1}({L_1} \otimes {I_q})( {{S_c} + (L_1^{ - 1}{L_2} \otimes {I_{\bar q}})({1_m} \otimes S)} ) \label{eq:40}
\end{align} 
where ${S_c} = col({S_1},...,{S_n})$. Using Lemma 1, it is easy to show that
\begin{align}
- (L_1^{ - 1}{L_2} \otimes {I_{\bar q}})({1_m} \otimes S) = ({1_n} \otimes S).  \label{eq:41}
\end{align}
 
Then,
\begin{align}
\dot {\tilde {S}} =  - {\beta _1}({L_1} \otimes {I_{\bar q}})\tilde S
 \label{eq:42}
\end{align} 
or equivalently
\begin{align}
Vec(\dot {\tilde {S}}) =  - {\beta _1}({I_{\bar q}} \otimes {L_1} \otimes {I_{\bar q}})Vec(\tilde S).
 \label{eq:43}
\end{align} 

According to Assumption 1 and Lemma 1, all the eigenvalues of ${L_1}$  have positive real parts. Hence, ${\lim _{t \to \infty }}Vec(\tilde S(t)) = 0$ exponentially, which yields ${\lim _{t \to \infty }}{\tilde S_i}(t) = 0$, $\forall i \in \mathcal{F}$. Similar to part (1), one can see that ${\lim _{t \to \infty }}{\tilde D_i}(t) = 0$, $\forall i \in \mathcal{F}$ exponentially for any positive constant ${\beta _3}$.

Next, it remains to prove part (3) and part (4), i.e. ${\lim _{t \to \infty }}{\tilde \eta _i}(t) = 0$ and ${\lim _{t \to \infty }}{\tilde y_{0i}}(t) = 0$. To this end, using (\ref{eq:35}) and some manipulations yields
\begin{align}
{{\dot {\tilde{ \eta}} }_i} =& {S_i}{{\hat \eta }_i} + {\beta _2}(\sum\limits_{j = 1}^n {{a_{ij}}({{\hat \eta }_j} - {{\hat \eta }_i})}  + \sum\limits_{k = n + 1}^{n + m} {\delta _i^k({\omega _k} - {{\hat \eta }_i})} )+({\left[ {L_1^{ - 1}{L_2}} \right]_i} \otimes {I_{\bar q}})({I_m} \otimes S)\omega   \nonumber \\
 =&{\beta _2}(\sum\limits_{j = 1}^n {{a_{ij}}({{\hat \eta }_j} - {{\hat \eta }_i})}  + \sum\limits_{k = n + 1}^{n + m} {\delta _i^k({\omega _k} - {{\hat \eta }_i})} )+S{{\tilde \eta }_i} + {{\tilde S}_i}{{\tilde \eta }_i} + {{\tilde S}_i}\omega _i^*.  \label{eq:44}
\end{align}

The global form of (\ref{eq:44}) can be written as
\begin{align}
\dot {\tilde {\eta}}  =& (({I_n} \otimes S))\tilde \eta  + {{\tilde S}_d}\tilde \eta  + {{\tilde S}_d}{\omega ^*} - {\beta _2}({L_2} \otimes {I_{\bar q}})\omega- {\beta _2}({L_1} \otimes {I_{\bar q}})(\tilde \eta  - (L_1^{ - 1}{L_2} \otimes {I_{\bar q}})\omega ) \nonumber \\
 =& (({I_n} \otimes S) - {\beta _2}({L_1} \otimes {I_{\bar q}}))\tilde \eta  + {{\tilde S}_d}\tilde \eta  + {{\tilde S}_d}{\omega ^*}
 \label{eq:45}
\end{align} 
where $\tilde \eta  = col({\tilde \eta _1},...,{\tilde \eta _n})$, ${\omega ^*} = col(\omega _1^*,...,\omega _n^*)$ and ${\tilde S_d} = diag\allowbreak({\tilde S_1},...,{\tilde S_n})$. According to Assumption 4, for any ${\beta _1},{\beta _2} > 0$, the matrix $( {({I_n} \otimes S) - {\beta _2}({L_1} \otimes {I_{\bar q}})} )$ is Hurwitz, and ${\tilde S_d}{\omega ^*}$ decays to zero exponentially. Hence, applying Lemma 4 we conclude that ${\lim _{t \to \infty }}{\tilde \eta _i}(t) = 0$  exponentially. Adding and subtracting ${D_i}\omega _i^*$ to the right hand side of ${\tilde y_{0i}}(t)$, we have
\begin{align}
{\tilde y_{0i}}(t) = {D_i}({\hat \eta _i} - \omega _i^*) + ({D_i} - D)\omega _i^*.
 \label{eq:46}
\end{align}

Therefore, if ${\lim _{t \to \infty }}{\tilde \eta _i}(t) = 0$ and ${\lim _{t \to \infty }}{\tilde D_i}(t) = 0$, then ${\lim _{t \to \infty }}{\tilde y_{0i}}(t) = 0$, which completes the proof.
\end{proof}

\section{Optimal Output Containment Control Problem}\label{sec5}

In this section, first an optimal formulation for the output containment control problem is introduced. Then, an off-policy reinforcement learning (RL) algorithm is proposed to drive the followers optimally into a convex hull spanned by the leaders' outputs. At first, it is assumed that followers can have access to their states and the leaders' states and dynamics. Then, this restrictive assumption is relaxed by combining the RL based optimal control presented in this section with distributed adaptive observer designed in Section 4 and the distributed observer (\ref{eq:14}).

\subsection{Problem Formulation and Its Offline Solution}

The aim of optimal output containment control problem is to design the distributed control laws (\ref{eq:13})-(\ref{eq:15}) such that the followers converge into a convex hull spanned by the leaders' outputs, and at the same time, the transient output containment error captured by (\ref{eq:9}) as well as the followers' control efforts  are minimized. To do so, an optimal output containment control problem is defined as follows.

\begin{problem} [Optimal output containment control problem]
Consider the control protocol (\ref{eq:13}) and let the distributed state observer (\ref{eq:14}) be designed as of Lemma 2. Moreover, let the distributed observers (\ref{eq:35})-(\ref{eq:37}) be designed as of Theorem 2. Design the gain matrices $K_i^1$ and $K_i^2$, $\forall i \in \mathcal{F}$ in the control protocol (\ref{eq:13}), so that not only properties 1 and 2 in Problem 1 are satisfied, but also the following discounted performance function is minimized.
\begin{align}
{V_i}\left( t \right) = \int\limits_t^\infty  {{e^{ - {\gamma _i}(\tau  - t)}}({{\bar e}_i}^T{Q_i}{{\bar e}_i} + {u_i}^T{W_i}{u_i})d\tau } 
 \label{eq:47}
\end{align} 
for $i \in \mathcal{F}$,where ${Q_i} \in {\Re ^{q \times q}}$ and ${W_i}= diag(w_i^1,...,w_i^{{p_i}}) \in {\Re ^{{p_i} \times {p_i}}}$ are symmetric positive definite weight matrices, $w_i^j > 0$ for $j=1,...,p_i$, ${\gamma _i} > 0$ is the discount factor, and 
\begin{align}
{\bar e_i} = {y_i} - D\omega _i^*
 \label{eq:48}
\end{align} 
is the containment error with $\omega _i^* =  - ({\left[ {L_1^{ - 1}{L_2}} \right]_i} \otimes {I_{\bar q}})\omega $, $\omega  = col({\omega _1},...,{\omega _m})$.
\end{problem} 

\begin{remark} Note that (\ref{eq:9}) can be rewritten as follows
\begin{align}
e = \left( {{L_1} \otimes {I_q}} \right)\bar e
 \label{eq:49}
\end{align} 
where
\begin{align}
\bar e = {Y_F} + \left( {L_1^{ - 1}{L_2} \otimes {I_q}} \right){Y_R}.
 \label{eq:50}
\end{align}

Therefore, according to Lemma 1, one can see that minimizing $\bar e$ results in minimizing the output containment error (\ref{eq:9}). It was shown that if the state observer gains in (\ref{eq:14}) are designed as of Lemma 2, then ${\xi _i} \to {x_i}$. Moreover, it was shown that if the distributed observers (\ref{eq:35})-(\ref{eq:37}) are designed as of Theorem 2, then ${S_i} \to S$, ${D_i} \to D$, and ${\hat \eta _i} \to \omega _i^*$. Therefore, in steady-state, the control protocol (\ref{eq:13}) becomes 
\begin{align}
{u_i} = K_i^1{x_i} + K_i^2\omega _i^*.
 \label{eq:51}
\end{align} 
\end{remark}

It is shown in the following that if this control protocol is designed to minimize (\ref{eq:47}), then the output containment control Problem 2 is solved. One can write (\ref{eq:51}) in the following form 
\begin{align}
{u_i} = {\bar K_i}{\bar X_i}
 \label{eq:52}
\end{align} 
where ${\bar K_i} = [K_i^1,K_i^2]$ and ${\bar X_i}(t) = {[{x_i}^T, {\omega_i^*}^T]^T}$ denotes an augmented state.

The augmented dynamics is given by 
\begin{align}
&{\dot {\bar X}_i} = {P_i}{\bar X_i} + {\bar B_i}{u_i} \label{eq:53} \\
&{\bar e_i} = {\bar {\bar C}_i}{\bar X_i}  \label{eq:54}
\end{align} 
with
\begin{align}
&{P_i} = \left[ {\begin{array}{*{20}{c}}
{{A_i}}&0\\
0&S
\end{array}} \right],
{\bar B_i} = \left[ {\begin{array}{*{20}{c}}
{{B_i}}\\
0
\end{array}} \right] \label{eq:55} \\
&{\bar {\bar C}_i} = \left[ {\begin{array}{*{20}{c}}
{{C_i}}&{ - D}
\end{array}} \right].
 \label{eq:56}
\end{align} 

Now, the value function  (\ref{eq:47}) can be written as the following quadratic form
\begin{align}
V({X_i},{u_i}) &= \int\limits_t^\infty  {{e^{ - {\gamma _i}(\tau  - t)}}{{\bar X}_i}^T({{\bar{ \bar C}}_i}^T{Q_i}{{\bar {\bar C}}_i} + {{\bar K}_i}^T{W_i}{{\bar K}_i}){{\bar X}_i}d\tau } \nonumber \\
&= {{\bar X}_i}^T{\Psi_i}{{\bar X}_i}.
 \label{eq:57}
\end{align}

Hamiltonian is defined as follows

\begin{align}
H({\bar X_i},{u_i},V({X_i})) = \frac{{dV({X_i})}}{{dt}} - {\gamma _i}V({X_i}) + {\bar X_i}^T({\bar {\bar C}_i}^T{Q_i}{\bar {\bar C}_i} + {\bar K_i}^T{W_i}{\bar K_i}){\bar X_i}  \label{eq:58} 
\end{align}

Then, using stationary condition\cite{LewisBook}, i.e., ${{\partial H} \mathord{\left/
 {\vphantom {{\partial H} {\partial {u_i}}}} \right.
 \kern-\nulldelimiterspace} {\partial {u_i}}} = 0$, the optimal control gain in (\ref{eq:52}) is derived as follows 
\begin{align}
u_i^* = \bar K_i^*{\bar X_i}
 \label{eq:59}
\end{align} 
with
\begin{align}
{\bar K_i}^* &= [ {K{{_i^1}^*},K{{_i^2}^*}} ] \nonumber \\
 &=  - W_i^{- 1}\bar B_i^T{\Psi _i} \label{eq:60}
\end{align} 
where ${\Psi_i}$ satisfies the discounted algebraic Riccati equation (ARE) (\ref{eq:61}).

Substituting (\ref{eq:59}), (\ref{eq:60}), and (\ref{eq:57}) into (\ref{eq:58}), yields the discounted ARE as follows 

\begin{align}
P_i^T{\Psi_i} + {P_i}{\Psi_i} - {\gamma _i}{\Psi_i} + {\bar {\bar C}_i}^T{Q_i}{\bar {\bar C}_i} - {\Psi_i}{\bar B_i}W_i^{ - 1}\bar B_i^T{\Psi_i} = 0.
 \label{eq:61}
\end{align} 

\begin{remark}
 Note that the ARE (\ref{eq:61}) and consequently the control gain (\ref{eq:60}) requires complete knowledge of the leaders' dynamics. To learn the gains (\ref{eq:60}) in an online fashion without requiring complete knowledge of the leaders' dynamics, an off-policy RL algorithm is designed in Subsection 5.2.
\end{remark}

\begin{remark}
 It is shown in Modares et al.\cite{Modares2015} that if discount factor ${\gamma _i}$ in (\ref{eq:47}) satisfies 
\begin{align}
{\gamma _i} \le \gamma _i^* = 2\left\| {{{({{\bar B}_i}W_i^{ - 1}\bar B_i^T{Q_i})}^{{1 \mathord{\left/
 {\vphantom {1 2}} \right.
 \kern-\nulldelimiterspace} 2}}}} \right\|,
 \label{eq:62}
\end{align} 
then, the value function (\ref{eq:47}) is bounded for the control policy (\ref{eq:59}) and the closed-loop agents' dynamics are stable. This is equivalent to Property 1 in Problem 1. The following results show that solving Problem 2 actually solves Problem 1 in an optimal manner.
\end{remark}

\begin{lemma} Consider the multi-agent system (\ref{eq:2})-(\ref{eq:5}) and the control policy (\ref{eq:59}) with gain given by (\ref{eq:60}) and (\ref{eq:61}). If the discount factors ${\gamma _i}$ satisfies (\ref{eq:62}), then $\bar e = {Y_F} + \left( {L_1^{ - 1}{L_2} \otimes {I_q}} \right){Y_R}$ and consequently output containment error (\ref{eq:9}) goes to zero asymptotically.
\end{lemma}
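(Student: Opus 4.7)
The plan is to reduce the global containment claim to an individual tracking result. By (\ref{eq:49}) we have $e = (L_1 \otimes I_q)\bar e$, and $L_1$ is invertible by Lemma~1, so it suffices to prove $\bar e_i(t) \to 0$ for every $i \in \mathcal{F}$. To this end I would introduce the local regulation error $\tilde x_i = x_i - \Pi_i \omega_i^*$, where $(\Pi_i,\Gamma_i)$ is the unique solution of the regulator equations (\ref{eq:6}) furnished by Assumption~5. Because $D = C_i \Pi_i$, the containment error collapses to $\bar e_i = C_i x_i - D\omega_i^* = C_i \tilde x_i$, so convergence of $\tilde x_i$ immediately delivers convergence of $\bar e_i$.

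Next I would derive the dynamics of $\tilde x_i$. Since $\dot\omega_i^* = S\omega_i^*$ and $\Pi_i S = A_i \Pi_i + B_i \Gamma_i$, substituting the optimal policy $u_i^* = K_i^{1*}x_i + K_i^{2*}\omega_i^*$ yields $\dot{\tilde x}_i = (A_i + B_i K_i^{1*})\tilde x_i + B_i(K_i^{1*}\Pi_i + K_i^{2*} - \Gamma_i)\omega_i^*$. The argument therefore splits into two pieces: (i) $A_i + B_i K_i^{1*}$ is Hurwitz, and (ii) the feed-forward gain satisfies $K_i^{2*} = \Gamma_i - K_i^{1*}\Pi_i$. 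For (i) I would invoke Remark~17: under the bound (\ref{eq:62}) on $\gamma_i$, the optimal value $V_i = \bar X_i^T \Psi_i \bar X_i$ is finite along closed-loop trajectories, which forces the feedback block of the closed-loop augmented matrix to be Hurwitz, since otherwise $x_i$ would grow unboundedly under the marginally stable drive $\omega_i^*$ and $V_i$ would diverge.

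For (ii) I would exploit the block structure of $P_i$ and $\bar B_i$ in (\ref{eq:55})--(\ref{eq:56}). Partitioning $\Psi_i$ conformably with $P_i$ and reading off the $(1,2)$ block of the discounted ARE (\ref{eq:61}) produces a Sylvester-type equation whose unique solution ties $\Psi_i^{12}$, and thereby $K_i^{2*} = -W_i^{-1}B_i^T\Psi_i^{12}$, to the regulator data $(\Pi_i,\Gamma_i)$. The structural identity $C_i\Pi_i = D$ built into $\bar{\bar C}_i = [C_i,\,-D]$ is precisely what forces this Sylvester solution to annihilate the feed-forward residual, giving $K_i^{1*}\Pi_i + K_i^{2*} = \Gamma_i$. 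The principal obstacle is this algebraic step: rigorously verifying that the discounted ARE reproduces the output-regulation construction of Theorem~1 rather than only an approximate version distorted by the discount factor. Once both (i) and (ii) hold, the displayed dynamics of $\tilde x_i$ become $\dot{\tilde x}_i = (A_i + B_i K_i^{1*})\tilde x_i$, whence $\tilde x_i \to 0$ exponentially, $\bar e_i = C_i \tilde x_i \to 0$, and consequently $\bar e \to 0$ and $e = (L_1\otimes I_q)\bar e \to 0$, which is the claimed asymptotic vanishing of the output containment error.
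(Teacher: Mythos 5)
Your reduction to the per-agent regulation error $\tilde x_i = x_i - \Pi_i\omega_i^*$ is clean, and the decomposition into (i) Hurwitzness of $A_i + B_iK_i^{1*}$ and (ii) the feed-forward identity $K_i^{2*} = \Gamma_i - K_i^{1*}\Pi_i$ correctly identifies what your route would need. But step (ii) is exactly where the argument does not close, and you have flagged it yourself without resolving it. Partitioning $\Psi_i$ conformably with $P_i = \mathrm{diag}(A_i,S)$, the $(1,2)$ block of the discounted ARE (\ref{eq:61}) reads
\begin{align}
\left( {A_i} + {B_i}K{{_i^1}^*} \right)^T\Psi _i^{12} + \Psi _i^{12}\left( S - {\gamma _i}{I_{\bar q}} \right) = C_i^T{Q_i}D, \nonumber
\end{align}
and the term $-\gamma_i\Psi_i^{12}$ shifts this Sylvester equation away from the one whose solution would reproduce the output-regulation construction of Theorem~1. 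For $\gamma_i>0$ the resulting $K_i^{2*} = -W_i^{-1}B_i^T\Psi_i^{12}$ does not in general equal $\Gamma_i - K_i^{1*}\Pi_i$, so the forcing term $B_i(K_i^{1*}\Pi_i + K_i^{2*} - \Gamma_i)\omega_i^*$ in your $\tilde x_i$-dynamics does not vanish, and since $\omega_i^*$ is only marginally stable this residual does not decay either. Your plan therefore proves at best boundedness of $\bar e_i$, not convergence to zero, unless you supply an additional argument for why the discount term is harmless — which is precisely the "principal obstacle" you name but do not overcome.

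The paper avoids this entirely by never relating the optimal gains to $(\Pi_i,\Gamma_i)$. Instead it works globally with $V(\bar X)=\bar X^T\Psi\bar X$, observes from the quadratic form of the ARE that the null space of $\Psi$ is contained in the null space of $\bar{\bar C}^T\bar Q\bar{\bar C}$ (i.e., the set where $\bar e = 0$), shows $\dot V = \bar X^T(\Psi\bar A_c + \bar A_c^T\Psi)\bar X \le 0$ using the marginal stability of the closed-loop block matrix under condition (\ref{eq:62}), and invokes LaSalle's invariance principle to conclude that $\bar X$ converges to the null space of $\Psi$ and hence that $\bar e \to 0$. That route buys exactly what yours lacks: it never needs the feed-forward gain to solve the regulator equations, only that the sublevel-set geometry of $\Psi$ forces trajectories into the set where the containment error vanishes. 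To salvage your approach you would either have to restrict to $\gamma_i = 0$ (where the Sylvester-block argument does recover $K_i^{2*}=\Gamma_i - K_i^{1*}\Pi_i$) or fall back on an invariance argument of the paper's type.
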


\begin{proof}
 First, let the ARE (\ref{eq:61}) be written in the compact form as follows
\begin{align}
{P^T}\Psi + P\Psi - \bar \gamma \Psi + {\bar {\bar C}^T}\bar Q\bar {\bar C} - \Psi\bar B{W^{ - 1}}{\bar B^T}\Psi =0 
 \label{eq:63}
\end{align} 
where $P = diag({P_1},...,{P_n})$, $\Psi = diag({\Psi_1},...,{\Psi_n})$,  $\bar \gamma  = diag\allowbreak({\gamma _1},...,{\gamma _n})$, $W = diag({W_1},...,{W_n})$, $\bar {\bar C} = diag({\bar {\bar C}_1},...,{\bar {\bar C}_n})$, $\bar Q = diag({Q_1},...{Q_n})$, and $\bar B = diag({B_1},...,{B_n})$.  Multiplying ${\bar X^T}$ and $\bar X$ to right and left sides of (\ref{eq:63}), respectively, one has 
\begin{align}
2{\bar X^T}{P^T}\Psi\bar X - {\bar X^T}\bar \gamma \Psi\bar X + {\bar X^T}{\bar {\bar C}^T}\bar Q\bar {\bar C}\bar X - {\left( {\Psi\bar X} \right)^T}\bar B{W^{ - 1}}{\bar B^T}\Psi\bar X = 0.
 \label{eq:64}
\end{align} 
Observing (\ref{eq:64}), one can see that, the null space of $\Psi$ is a subspace of the null space of ${\bar {\bar C}^T}\bar Q\bar {\bar C}$. Using this observation, it can be seen that if ${\bar X^T}\Psi\bar X = 0$, then ${\bar X^T}{\bar {\bar C}^T}\bar Q\bar {\bar C}\bar X = 0$ and consequently ${({Y_F} - (L_1^{ - 1}{L_2} \otimes {I_q}){Y_R})^T}{\bar Q}({Y_F} - (L_1^{ - 1}{L_2} \otimes {I_q}){Y_R}) = 0$ which yields ${Y_F} + (L_1^{ - 1}{L_2} \otimes {I_q}){Y_R} = 0$ and consequently the containment error $e = 0$, which is defined in (\ref{eq:9}). Therefore, one can conclude that the null space of $\Psi$ is a subspace in which the containment error $e$ is zero. It remains to show that providing the discount factors ${\gamma _i}$ satisfies the (\ref{eq:62}), then the null space of $\Psi$ is attractive. To this end, choose the following Lyapunov function
\begin{align}
V(\bar X) = {\bar X^T}\Psi\bar X.
 \label{eq:65}
\end{align} 
Taking the derivative of (\ref{eq:65}) gives
\begin{align}
\begin{array}{l}
\dot V(\bar X) = \sum\limits_{i \in \mathcal{F}} {{{\bar X}_i}^T\left( {{\Psi _i}{A_{ci}} + A_{ci}^T{\Psi _i}} \right){{\bar X}_i}} \\
 \quad \quad \, \, \, \,= {{\bar X}^T}\left( {\Psi {{\bar A}_c} + \bar A_c^T\Psi } \right)\bar X
\end{array}  \label{eq:66}
\end{align} 
where ${\bar A_c} = diag({A_{c1}},...,{A_{cn}})$ and
\begin{align}
{A_{ci}} = \left[ {\begin{array}{*{20}{c}}
{{A_i} + {B_i}K{{_i^1}^*}}&{{B_i}K{{_i^2}^*}}\\
0&S
\end{array}} \right]. \label{eq:67}
\end{align} 
 
It was mentioned in Remark 12 that ${A_i} + {B_i}K_i^1$ is Hurwitz, providing that the discount factors ${\gamma _i}$ satisfies the (\ref{eq:62}). Therefore, under Assumption 4, one can conclude $\forall i \in {\cal F}$, ${A_{ci}}$ and consequently ${\bar A_c}$ are marginally stable. Therefore, there exist a positive semi definite matrix $\Upsilon  =  - \left( {\Psi {{\bar A}_c} + \bar A_c^T\Psi } \right)$ such that $\dot V(\bar X) =  - {\bar X^T}\Upsilon \bar X \le 0$. Based on LaSalle's invariance principle, $\bar X$ converges to the largest invariance subspace where $\dot V(\bar X) = 0$. As mentioned above, the null space of $\Psi$ is a subspace of where the output containment error $e$ is equal to zero, therefore based on (\ref{eq:66}), $\dot V(\bar X) = 0$ if $\Psi X = 0$ and consequently the output containment error $e$ is equal to zero, which completes the proof.
\end{proof}

Note that ${{ \bar X}_i}(t) = {[{x_i}^T,{\omega_i^*}^T]^T}$ in (\ref{eq:59}) is depending on absolute state of the $i$th follower, and requiring the knowledge of the leaders' dynamics and the graph topology $\mathcal{G}$, which are not available for $i$th follower, therefore ${\hat{ \bar X}_i}(t) = {[{\xi_i}^T,{\hat{\eta_i}^T}]^T}$ should be used in place of ${\bar X_i}(t)$ to implement the optimal control (\ref{eq:59}) without requiring the knowledge of the leaders' dynamics, graph topology $\mathcal{G}$, and absolute state of the $i$th follower. By doing so, optimal control (\ref{eq:59}) becomes
\begin{align}
u_i^* = \bar K_i^*{\hat {\bar X}_i} \label{eq:68}
\end{align}
where ${\hat{ \bar X}_i}(t) = {[{\xi_i}^T,{\hat{\eta_i}^T}]^T}$ and $\bar K_i^*$ is obtained by (\ref{eq:60})-(\ref{eq:61}).

\begin{theorem} \label{thm3}
Consider the multi-agent system (\ref{eq:2})-(\ref{eq:5}) and the distributed adaptive observer (\ref{eq:35}) along with adaptation laws (\ref{eq:36}) and (\ref{eq:37}). Under Assumptions 1 - 6, Problem 2 and consequently Problem 1 are solved using optimal control policy (\ref{eq:68}) with $\bar K_i^*$ given by (\ref{eq:60}) and (\ref{eq:61}). As long as, ${\mu _i}$  and ${F_i}$ be designed as of Lemma 2, the discount factors ${\gamma _i}$ satisfy (\ref{eq:62}), and ${\beta _1}, {\beta _2}$, and  ${\beta _3}$ in (\ref{eq:35})-(\ref{eq:37}) be any positive constant.
\end{theorem}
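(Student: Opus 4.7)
The plan is to exploit a separation-principle style decomposition already prepared by the preceding results: Lemma~2 guarantees that the distributed state observer error $\tilde X = X - \xi$ governed by (\ref{eq:17}) decays to zero asymptotically, Theorem~2 establishes exponential decay of the adaptive observer errors $\tilde S_i, \tilde D_i, \tilde \eta_i, \tilde y_{0i}$, and Lemma~5 shows that the ideal optimal feedback $u_i^{*} = \bar K_i^{*}\bar X_i$ built on the true augmented state $\bar X_i = [x_i^{T},\omega_i^{*T}]^{T}$ drives the output containment error $e$ in (\ref{eq:9}) to zero while minimizing the cost (\ref{eq:47}). The core task therefore reduces to verifying that the observer-based surrogate $\hat{\bar X}_i = [\xi_i^{T},\hat{\eta}_i^{T}]^{T}$ substituted for $\bar X_i$ in the control (\ref{eq:68}) does not destroy either the containment property or the attendant optimality.

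First I would rewrite the realized control as $u_i = \bar K_i^{*}\bar X_i + \bar K_i^{*}\bigl(\hat{\bar X}_i - \bar X_i\bigr)$, recognize the perturbation term as built from $\xi_i - x_i = -\tilde X_i$ and $\hat{\eta}_i - \omega_i^{*} = \tilde\eta_i$, and substitute into the augmented dynamics (\ref{eq:53}). This gives a cascade of the nominal optimal closed loop $\dot{\bar X}_i = (P_i + \bar B_i \bar K_i^{*})\bar X_i$ driven by the additive input $\bar B_i \bar K_i^{*}[-\tilde X_i^{T},\tilde\eta_i^{T}]^{T}$. Because the state observer dynamics (\ref{eq:17}) are autonomous and decoupled from $u_i$ (Lemma~2), and the adaptive observer errors decay exponentially (Theorem~2, parts 1--3) independently of the follower closed loop, the driving perturbation is a vanishing, eventually exponentially decaying signal. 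A direct application of Lemma~4 (with $\mathrm{Z} = P_i + \bar B_i \bar K_i^{*}$ Hurwitz on the containment subspace by Lemma~5, and $\mathrm{Z}_1 \equiv 0$, $\mathrm{Z}_2$ identified with the observer-error input) lets me conclude that $\bar e_i \to 0$ and hence, via (\ref{eq:49}), that the global containment error $e(t)\to 0$, securing Property~2 of Problem~1.

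For Property~1 of Problem~1 and the optimality claim of Problem~2, I would invoke Remark~12: since $\gamma_i$ satisfies (\ref{eq:62}), the nominal discounted ARE (\ref{eq:61}) yields $A_i + B_i K_i^{1*}$ Hurwitz, so with $\Omega$ detached the full closed loop inherits stability of the $A_c$ matrix in (\ref{eq:24}) by the block-triangular argument of Theorem~\ref{thm1}. Optimality then follows because $V_i$ evaluated along the realized trajectory differs from $\bar X_i^{T}\Psi_i \bar X_i$ only through an integrable transient generated by the exponentially decaying observer mismatch; this transient vanishes in the discounted integral, so the realized value asymptotically attains the minimum. Combining these three strands — observer convergence, cascade perturbation analysis for containment, and ARE-based closed-loop stability — completes the verification that Problem~2 (and hence Problem~1) is solved.

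The main obstacle I anticipate is justifying the cascade argument rigorously in the presence of two coupled observers of quite different character: the state observer (\ref{eq:14}) whose separation property relied on a linear feedback of $\xi_i$, and the adaptive leader observer (\ref{eq:35}) which is nonlinearly coupled through the product $S_i \hat\eta_i$ appearing in (\ref{eq:44})--(\ref{eq:45}). One must ensure that the bounded growth of $\bar X_i$ during the combined observer transient is outpaced by the exponential decay rates delivered by Theorem~2, and that no hidden algebraic loop forms between $\dot\xi_i$ in (\ref{eq:14}) (which itself contains $u_i$) and the observer-based control $u_i = \bar K_i^{*}\hat{\bar X}_i$. Because Theorem~2 supplies a decay rate independent of the follower dynamics and the $u_i$-terms cancel inside (\ref{eq:14}) leaving an autonomous error equation (\ref{eq:17}), the coupling is benign and Lemma~4 applies; this cancellation is the delicate point that I would write out carefully before invoking the cascade lemma.
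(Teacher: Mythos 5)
Your proposal follows essentially the same route as the paper: a separation-principle argument in which the block-triangular (cascade) structure of the combined follower/observer dynamics lets you conclude that the observer errors vanish independently of the control, after which the ideal optimal loop analyzed in Lemma~5 and Remark~12 delivers Properties~1 and~2. The paper's proof writes the closed loop as (\ref{eq:69}), applies the same transformation $T$ as in Theorem~\ref{thm1} to expose the triangular structure, and then simply chains Theorem~2 ($\hat\eta_i \to \omega_i^{*}$, $D_i \to D$), Lemma~2 ($\xi_i \to x_i$) and Lemma~5 ($Y_F \to (I_n\otimes D)\hat\eta$); your additive decomposition $u_i = \bar K_i^{*}\bar X_i + \bar K_i^{*}(\hat{\bar X}_i - \bar X_i)$ is the same idea made explicit, and your observation that the $u_i$-terms cancel in (\ref{eq:17}) is precisely why the triangular structure exists. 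The one place your write-up needs repair is the invocation of Lemma~4 with $\mathrm{Z} = P_i + \bar B_i\bar K_i^{*}$: this matrix is only marginally stable (its lower-right block is $S$, and $\omega_i^{*}$ does not decay), so Lemma~4 cannot be applied to the full augmented state $\bar X_i$. You must instead run the vanishing-perturbation argument on a quantity that does converge --- e.g.\ the deviation $x_c - X_c\Omega$ of Lemma~3, or reuse the LaSalle/null-space-of-$\Psi$ machinery of Lemma~5 with the exponentially decaying observer mismatch treated as a vanishing disturbance --- which is effectively what the paper does by citing Lemma~5 directly. With that substitution your argument closes and is, if anything, more explicit than the paper's about why the cascade is benign.
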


\begin{proof}
Using (\ref{eq:2})-(\ref{eq:5}), (\ref{eq:14}), (\ref{eq:35}), and (\ref{eq:68}), we obtain
\begin{align}
\left[ {\begin{array}{*{20}{c}}
{\dot X}\\
{\dot \xi }\\
{\dot{ \hat \eta}  }
\end{array}} \right] = \left[ {\begin{array}{*{20}{c}}
A&{\bar BK_1^*}&{\bar BK_2^*}\\
{\mu F({L_1} \otimes {I_q})\bar C}&{A + \bar BK_1^* - \mu F({L_1} \otimes {I_q})\bar C}&{\bar BK_2^*}\\
{{0_{n\bar q \times N}}}&{{0_{n\bar q \times N}}}&{{{\bar{ \bar S}}_i} - ({L_1} \otimes \beta {I_{\bar q}})}
\end{array}} \right]\left[ {\begin{array}{*{20}{c}}
X\\
\xi \\
{\hat \eta }
\end{array}} \right] + \left[ {\begin{array}{*{20}{c}}
{{0_{N \times m\bar q}}}\\
{({L_2} \otimes {D_i})}\\
{ - ({L_2} \otimes \beta {I_{\bar q}})}
\end{array}} \right]\Omega  \label{eq:69}
\end{align}
where $\hat \eta = col({\hat \eta _1},...,{\hat \eta _n})$, $A = diag({A_1},...,{A_n})$, $\bar B = diag({B_1},...,{B_n})$, $\bar C = diag({C_1},...,{C_n})$, $N = \sum\limits_{i = 1}^n {{N_i}} $, $K_1^* = diag({K_1^1}^*,...,{K_n^1}^*)$, $K_2^* = diag({K_1^2}^*,...,{K_n^2}^*)$ and ${\bar {\bar S}_i} = {I_n} \otimes {S_i}$. For nonsingular matrix 
\begin{align}
T = \left[ {\begin{array}{*{20}{c}}
{{I_N}}&{{0_{N \times N}}}&{{0_{N \times n\bar q}}}\\
{{I_N}}&{{I_N}}&{{0_{N \times n\bar q}}}\\
{{0_{n\bar q \times N}}}&{{0_{n\bar q \times N}}}&{{I_{n\bar q}}}
\end{array}} \right],{\rm{ }} \nonumber
\end{align}
one can verify that (\ref{eq:69}) has a block-triangular structure and according to separation principle, the distributed observers and control gains in (\ref{eq:69}) can be designed separately. It is shown in Theorem 2 that, for any positive constant ${\beta _1}, {\beta _2}$, and ${\beta _3}$, ${S_i} \to S$, ${D_i} \to D$, ${\hat \eta _i} \to \omega _i^*$, $\forall i \in \mathcal{F}$, ${D_i}{\hat \eta _i} \to D\omega _i^*$ and thus for any full row rank matrix $D$, $( {{I_n} \otimes D} )\hat \eta  \to  - ( {L_1^{ - 1}{L_2} \otimes {I_q}} ){Y_R}$ asymptotically. Besides, it is shown in Lemma 2 that ${\xi _i} \to {x_i}$, ${C_i}{\xi _i} \to {y_i}$, $\forall i \in \mathcal{F}$ asymptotically. Finally, Lemma 5 shows that ${Y_F} \to ( {{I_n} \otimes D} )\hat \eta $ asymptotically, therefore ${Y_F} \to  - ( {L_1^{ - 1}{L_2} \otimes {I_q}} ){Y_R}$  asymptotically, which completes the proof.
\end{proof}

\begin{remark}
 Theorem 3 shows that one can design the observers gains appear in  (\ref{eq:14}), (\ref{eq:35})-(\ref{eq:37}) and the control gains $\bar K_i^*$ in (\ref{eq:68}) independently. As stated in Theorem 3, for $i \in \mathcal{F}$ , ${\mu _i}$ should be chosen sufficiently large, ${F_i}$ should be designed as (\ref{eq:18}) according to Lemma 2, and  ${\beta _1}$, ${\beta _2}$, and ${\beta _3}$ in (\ref{eq:35})-(\ref{eq:37}) are any positive constants. It can be seen from (\ref{eq:43}) and (\ref{eq:45}) that the larger ${\beta _1}$, ${\beta _2}$, and ${\beta _3}$ are , the faster the estimation errors of the distributed observer (\ref{eq:35})-(\ref{eq:37}) decays. It can also be seen from (\ref{eq:17}) that the larger ${\mu _i}$ for $i \in \mathcal{F}$ are, the faster the estimation errors decays for the observer (\ref{eq:14}). Choosing sufficiently large ${\mu _i}$ for $i \in F$, ${\beta _1}$, ${\beta _2}$, and ${\beta _3}$ makes the convergence of the observers sufficiently fast, and therefore, their effects on the control performance are negligible. The weight matrices ${Q_i}$  and ${W_i}$ for $i \in \mathcal{F}$ are design parameters and they should be chosen as symmetric positive definite matrices. The discount factors ${\gamma _i}$ for $i \in \mathcal{F}$ are used to guarantee that the performance functions is bounded for given control policies, and to do so, they should be chosen to satisfy condition (\ref{eq:62}) in Remark 12.
\end{remark}

\subsection{Distributed Optimal Output Containment Control: An Online and Optimal Solution}

In this subsection, an off-policy RL method\cite{Modares2015} is combined with actor-critic algorithm to solve discounted ARE (\ref{eq:61}) and learn the optimal gain (\ref{eq:60}) online in real time, using relative output measurements of followers with respect to their neighbors and the information which is broadcasted by neighbors, and without requiring complete knowledge of the leaders' dynamics. 

Since Value function (\ref{eq:57}) is in the form of a quadratic polynomial, the quadratic polynomial basis vector for the critic neural network (NN) of each follower is chosen as

\begin{align}
\sigma _i^c({\bar X_i}) = {[ {\begin{array}{*{20}{c}}
{\bar X_{i1}^2}&{{{\bar X}_{i1}}{{\bar X}_{i2}}}& \ldots &{\bar X_{iN_i^{{\sigma _c}}}^2}
\end{array}} ]^T} \in {\Re ^{{{N_i^{{\sigma _c}}(N_i^{{\sigma _c}} + 1)} \mathord{\left/
 {\vphantom {{N_i^{{\sigma _c}}(N_i^{{\sigma _c}} + 1)} 2}} \right.
 \kern-\nulldelimiterspace} 2}}}  \label{eq:70}
\end{align}
where ${\bar X_i} = {[{\bar X_{i1}},...,{\bar X_{iN_i^{{\sigma _c}}}}]^T}$ and $N_i^{{\sigma _c}} = {N_i} + \bar q$. Then, the optimal value function ${V_i}({\bar X_i})$ can be perfectly approximated by 
\begin{align}
{V_i}({\bar X_i}) = {\left( {\bar W_i^c} \right)^T}\sigma _i^c({\bar X_i}) \label{eq:71}
\end{align}
where $\bar W_i^c \in {\Re ^{M_i^{{\sigma _c}}}}$ is the optimal critic NN weight of $i$th follower and $M_i^{{\sigma _c}}= {{N_i^{{\sigma _c}}(N_i^{{\sigma _c}} + 1)} \mathord{\left/
 {\vphantom {{N_i^{{\sigma _c}}(N_i^{{\sigma _c}} + 1)} 2}} \right.
 \kern-\nulldelimiterspace} 2}$ is the number of hidden-layer neurons of the critic NN of $i$th follower.

By the same token, the optimal policy in (\ref{eq:59}) can be perfectly approximated by an actor NN for each follower in the form of
\begin{align}
{u_i}^*({\bar X_i}) = {\left( {\bar W_i^a} \right)^T}\sigma _i^a({\bar X_i}) \label{eq:72}
\end{align}
where ${\sigma _a}({\bar X_i}) = {[{\bar X_{i1}},...,{\bar X_{iM_i^{{\sigma _a}}}}]^T}$, $\bar W_i^a \in {\Re ^{M_i^{{\sigma _a}} \times {p_i}}}$ is the optimal actor NN weight of $i$th follower  and $M_i^{{\sigma _a}}=N_i^{{\sigma _c}}$  is the number of hidden-layer neurons of the actor NN of $i$th follower, for $i \in \mathcal{F}$. The ideal weights of the critic NNs, i.e., ${\bar W_i^c}$, and the ideal weights of the actor NNs, i.e., ${\bar W_i^a}$, are unknown and must be estimated.

Let the value function corresponding to $\hat u_i^k$ be written as follows

\begin{align}
\begin{array}{l}
\hat V_i^k({{\bar X}_i}) = \int\limits_t^\infty  {{e^{ - {\gamma _i}(\tau  - t)}}{{\bar X}_i}^T({{\bar {\bar C}}_i}^T{Q_i}{{\bar {\bar C}}_i} + \bar K{{_i^k}^T}{W_i}\bar K_i^k){{\bar X}_i}d\tau } = {{\bar X}_i}^T\Psi _i^k{{\bar X}_i}
\end{array} \label{eq:73}
\end{align}
where 
\begin{align}
\hat u_i^k = \bar K_i^k{\bar X_i}  \label{eq:74}
\end{align}
with
\begin{align}
{\bar K_i}^k = - W_i^{- 1}\bar B_i^T{\Psi _i}^k \label{eq:75}
\end{align}
is the estimation of ${u_i}^*$ in the $k$th iteration, and ${\Psi _i}^k$ satisfies the discounted ARE (\ref{eq:80}).

Let the estimation of $\bar W_i^c$ in the $k$th iteration be denoted by $\hat {\bar W}{{_i^c}^k}$. Then, the value function (\ref{eq:71}) and its gradient $\nabla V_i({\bar X_i}) = {{\partial V_i} \mathord{\left/
 {\vphantom {{\partial \hat V_i^k} {\partial {{\bar X}_i}}}} \right.
 \kern-\nulldelimiterspace} {\partial {{\bar X}_i}}}$ are approximated as

\begin{align}
\hat V_i^k({\bar X_i}) = {({\hat {\bar W}{{_i^c}^k}})^T}\sigma _i^c({\bar X_i})  \label{eq:76}
\end{align}
\begin{align}
\nabla \hat V_i^k({\bar X_i}) = {(\nabla \sigma _i^c({\bar X_i}))^T}{\hat {\bar W}{{_i^c}^k}}  \label{eq:77}
\end{align}

Correspondingly, let $\hat {\bar W}{{_i^a}^k}$ denote the estimation of $\bar W_i^a$ in the $k$th iteration. Then, the estimation of ${\rm{u}}_i^*({\bar X_i})$ in the $k$th iteration is 

\begin{align}
\hat u_i^k({\bar X_i}) = {(\hat {\bar W}{{_i^a}^k})^T}\sigma _i^a({\bar X_i})  \label{eq:78}
\end{align}

One can now write (\ref{eq:53}) in following augmented form as
\begin{align}
{\dot {{\bar X}}_i} = P_i^k{\bar X_i} + {\bar B_i}({u_i} - \hat u_i^k)  \label{eq:79}
\end{align} 
where $P_i^k = {P_i} + {\bar B_i}\bar K_i^k$ and ${u_i}$ is a  behavior policy, which is an admissible policy, applied to the $i$th follower and $\hat u_i^k$ is the estimation of target policy, which is the optimal policy, in the $k$th iteration. 

Using (\ref{eq:73})-(\ref{eq:75}), (\ref{eq:79}) and some manipulation, the discounted ARE (\ref{eq:61}) becomes
\begin{align}
\Psi _i^kP_i^k + {(P_i^k)^T}\Psi _i^k - {\gamma _i}\Psi _i^k + {\bar {\bar C}_i}^T{Q_i}{\bar {\bar C}_i} - {(\bar K_i^k)^T}{W_i}\bar K_i^k = 0 \label{eq:80}
\end{align}

Consider the control policy update law as follows
\begin{align}
\begin{array}{l}
u_i^{k + 1} =  - W_i^{ - 1}\bar B_i^T\Psi _i^k{{\bar X}_i}
\end{array} \label{eq:81}
\end{align}

The derivative of (\ref{eq:73}) with respect to time along the system dynamics (\ref{eq:79}) can be derived as
\begin{align}
\begin{array}{l}
\dot {\hat V}_i^k({{\bar X}_i}) = {{\bar X}_i}^T(\Psi _i^kP_i^k + {(P_i^k)^T}\Psi _i^k){{\bar X}_i}\, + 2{{\bar X}_i}^T\Psi _i^k{{\bar B}_i}({u_i} - \hat u_i^k)\\
\,\,\,\,\,\,\,\,\,\,\,\,\,\,\,\,\,\,\, = {\gamma _i}\hat V_i^k({{\bar X}_i}) - {{\bar X}_i}^T{{\bar {\bar C}}_i}^T{Q_i}{{\bar {\bar C}}_i}{{\bar X}_i} - {(\hat u_i^k)^T}{W_i}\hat u_i^k - 2{(\hat u_i^{k + 1})^T}{W_i}({u_i} - \hat u_i^k)
\end{array} \label{eq:82}
\end{align}

Multiplying ${e^{ - {\gamma _i}(\tau  - t)}}$ to the both sides of (\ref{eq:82}), and integrating both sides on the time interval of $\left[ {t;{\rm{ }}t + T} \right]$, one can obtain the off-policy Bellman equation as follows

\begin{align}
\begin{array}{l}
{e^{ - {\gamma _i}T}}\hat V_i^k({{\bar X}_i}(t + T)) - \hat V_i^k({{\bar X}_i}(t)) = \int\limits_t^{t + T} {\frac{d}{{d\tau }}(\,{e^{ - {\gamma _i}(\tau  - t)}}\hat V_i^k({{\bar X}_i}))d\tau }  = \\
\qquad \qquad \qquad \qquad \int\limits_t^{t + T} {{e^{ - {\gamma _i}(\tau  - t)}}( - {{\bar X}_i}^T{{\bar {\bar C}}_i}^T{Q_i}{{\bar {\bar C}}_i}{{\bar X}_i}\, - {{(\hat u_i^k)}^T}{W_i}\hat u_i^k)d\tau }  + \,\int\limits_t^{t + T} {{e^{ - {\gamma _i}(\tau  - t)}}( - 2{{(\hat u_i^{k + 1})}^T}{W_i}({u_i} - \hat u_i^k))d\tau } 
\end{array}  \label{eq:83}
\end{align}

\begin{remark}
Note that to calculate the first integrand in the right-hand-side of (\ref{eq:83}), the knowledge of the leaders' dynamic $D$ is required due to dependency of ${\bar {\bar C}_i}^T{Q_i}{\bar {\bar C}_i}$ to it, i.e., ${\bar X_i}^T{\bar {\bar C}_i}^T{Q_i}{\bar {\bar C}_i}{\bar X_i} = {\left( {{y_i} - D\omega _i^*} \right)^T}{Q_i}({y_i} - D\omega _i^*)$. To obviate this requirement, the term of ${\bar X_i}^T{\bar {\bar C}_i}^T{Q_i}{\bar{ \bar C}_i}{\bar X_i}$, in first integrand in the right-hand-side of (\ref{eq:83}), is replaced by ${\left( {{{\hat y}_i} - {y_{0i}}} \right)^T}Q_i({\hat y_i} - {y_{0i}})$
, where ${y_{0i}}$ is the estimation of convex combination of the leaders' outputs, defined in (\ref{eq:38}) and ${\hat y_i} = {C_i}{\xi _i}$ is the output estimation of $i$th follower.
\end{remark}

Define $\upsilon _i^k = {u_i} - \hat u_i^k$, exploiting the critic NN (\ref{eq:76}), actor NN (\ref{eq:78}), weighted matrices of ${W_i} = diag(w_i^1,...,w_i^{{p_i}})$, and Remark 14, (\ref{eq:83}) can be put into the following form
\begin{align}
\begin{array}{l}
e_i^k = {(\hat {\bar W}{{_i^c}^k})^T}[{e^{ - {\gamma _i}T}}\sigma _i^c({{\bar X}_i}(t + T)) - \sigma _i^c({{\bar X}_i}(t))] + 2\sum\limits_{j = 1}^{{p_i}} {{{( {\hat {\bar W}{{_{i,j}^a}^{k + 1}}} )}^T}w_i^j\int\limits_t^{t + T} {{e^{ - {\gamma _i}(\tau  - t)}}\sigma _i^a({{\bar X}_i})\upsilon _{i,j}^kd\tau } } \\
\qquad  - \int\limits_t^{t + T} {{e^{ - {\gamma _i}(\tau  - t)}}( - {{( {{C_i}{\xi _i} - {D_i}{{\hat \eta }_i}} )}^T}{Q_i}({C_i}{\xi _i} - {D_i}{{\hat \eta }_i})\, - \sigma _i^a{{({{\bar X}_i})}^T}\hat {\bar W}{{_i^a}^k}{W_i}{{(\hat {\bar W}{{_i^a}^k})}^T}\sigma _i^a({{\bar X}_i}))d\tau } 
\end{array}\label{eq:84}
\end{align}
where $\hat {\bar W}{{_i^a}^{k + 1}} = {[ {\begin{array}{*{20}{c}}
{\hat {\bar W}{{_{i,1}^a}^{k + 1}}}& \cdots &{\hat {\bar W}{{_{i,{p_i}}^a}^{k + 1}}}
\end{array}}]}$, $\upsilon _i^k = {[ {\begin{array}{*{20}{c}}
{\upsilon _{i,1}^k}& \cdots &{\upsilon _{i,{p_i}}^k}
\end{array}} ]^T}$, and $e_i^k$ is the Bellman approximation error which should be minimized in order to drive the critic NN weights and actor NN weights toward their ideal values, i.e., $\hat {\bar W}{{_i^c}^k} \to \bar W_i^c$ and $\hat {\bar W}{{_i^a}^{k + 1}} \to \bar W_i^a$.

Rearranging (\ref{eq:84}), the Bellman equation (\ref{eq:83}) can be reformulated as follows 
\begin{align}
{Y_i}^k(t) = {({\hat W_i}^k)^T}\Xi _i^k(t)-e_i^k \label{eq:85}
\end{align}
where 
\begin{align}
{Y_i}^k(t) = \int\limits_t^{t + T} {{e^{ - {\gamma _i}(\tau  - t)}}( - {( {{C_i}{\xi _i} - {D_i}{{\hat \eta }_i}} )^T}{Q_i}({C_i}{\xi _i} - {D_i}{{\hat \eta }_i})\, - \sigma _i^a{{({{\bar X}_i})}^T}\hat {\bar W}{{_i^a}^k}{W_i}{{(\hat {\bar W}{{_i^a}^k})}^T}\sigma _i^a({{\bar X}_i}))d\tau } \label{eq:86}
\end{align}

\begin{align}
{\hat W_i}^k = [ {{(\hat {\bar W}{{_i^c}^k})^T},{( {\hat {\bar W}{{_{i,1}^a}^{k + 1}}})^T},...,{( {\hat {\bar W}{{_{i,{p_i}}^a}^{k + 1}}})^T}} ]^T \in {\Re ^{M_i^{{\sigma _c}} + {p_i} \times M_i^{{\sigma _a}}}} \label{eq:87}
\end{align}
\begin{align}
\Xi _i^k(t) = \left[ {\begin{array}{*{20}{c}}
{{e^{ - {\gamma _i}T}}\sigma _i^c({{\bar X}_i}(t + T)) - \sigma _i^c({{\bar X}_i}(t))}\\
{2w_i^1\int\limits_t^{t + T} {{e^{ - {\gamma _i}(\tau  - t)}}\sigma _i^a({{\bar X}_i})\upsilon _{i,1}^kd\tau } }\\
 \vdots \\
{2w_i^{{p_i}}\int\limits_t^{t + T} {{e^{ - {\gamma _i}(\tau  - t)}}\sigma _i^a({{\bar X}_i})\upsilon _{i,{p_i}}^kd\tau } }
\end{array}} \right]\in {\Re ^{M_i^{{\sigma _c}} + {p_i} \times M_i^{{\sigma _a}}}} \label{eq:88}
\end{align}

Assume that ${Y_i}^k(t)$ and $\Xi _i^k(t)$ are collected at $\mathbb{n}_i  \ge M_i^{{\sigma _c}} + {p_i} \times M_i^{{\sigma _a}}$ points ${t_1}$ to ${t_{\mathbb{n}_i}}$
, over the same time interval $\left[ {t;t + T} \right]$. 
Using least square method in average sense, one has
\begin{align}
{\hat W_i}^k = {( {\bar \Xi _i^k{( {\bar \Xi _i^k} )^T}} )^{ - 1}}\bar \Xi _i^k{\bar Y_i}^k \label{eq:89}
\end{align}
where 
\begin{align}
\bar \Xi _i^k = [\Xi _i^k({t_1}),...,\Xi _i^k({t_{\mathbb{n}_i}})] \label{eq:90}
\end{align}
\begin{align}
{\bar Y_i}^k = [{Y_i}^k({t_1}),...,{Y_i}^k({t_{\mathbb{n}_i} })]^T \label{eq:91}
\end{align}

\begin{remark}
Note that to calculate the information in (\ref{eq:86}) and (\ref{eq:88}) for each follower, the knowledge of graph topology $\mathcal{G}$ and absolute states of followers are required due to the dependency of ${\bar X_i}(t) = [{x_i}^T,\omega {{_i^*}^T}]^T$ in (\ref{eq:86}) and (\ref{eq:88}) to them. To obviate these requirements, in Algorithm 2, the term of ${\hat{ \bar X}_i}(t) = {[{\xi_i}^T,{\hat{\eta_i}^T}]^T}$ will be used in place of ${\bar X_i}(t) = {[{x_i}^T,{\omega_i^*}^T]^T}$ in (\ref{eq:86}) and (\ref{eq:88}). 
\end{remark}

To this end, by using ${\hat{ \bar X}_i}(t) = {[{\xi_i}^T,{\hat{\eta_i}^T}]^T}$ in place of ${\bar X_i}(t) = {[{x_i}^T,{\omega_i^*}^T]^T}$, (\ref{eq:89}) can be rewritten as follows

\begin{align}
{\hat W_i}^k = {( {\hat {\bar \Xi} _i^k{( {\hat {\bar \Xi} _i^k})^T}} )^{ - 1}} \hat {\bar \Xi} _i^k \hat {\bar Y}_i^k \label{eq:92}
\end{align}
where $\hat W_i^k$ is given in (\ref{eq:87}), and
\begin{align}
\hat {\bar \Xi} _i^k = [\hat \Xi _i^k({t_1}),...,\hat \Xi _i^k({t_{\mathbb{n}_i}})] \label{eq:93}
\end{align}
\begin{align}
{\hat {\bar Y}_i}^k = [\hat {Y_i}^k({t_1}),...,\hat {Y_i}^k({t_{\mathbb{n}_i} })]^T \label{eq:94}
\end{align}
\begin{align}
{\hat Y_i}^k(t) = \int\limits_t^{t + T} {{e^{ - {\gamma _i}(\tau  - t)}}( - {( {{C_i}{\xi _i} - {D_i}{{\hat \eta }_i}} )^T}{Q_i}({C_i}{\xi _i} - {D_i}{{\hat \eta }_i})\, - \sigma _i^a{({{\hat{ \bar X}_i}})^T}\hat {\bar W}{{_i^a}^k}{W_i}{(\hat {\bar W}{{_i^a}^k})^T}\sigma _i^a({{\hat{ \bar X}_i}}))d\tau } \label{eq:95}
\end{align}
\begin{align}
\hat \Xi _i^k(t) = \left[ {\begin{array}{*{20}{c}}
{{e^{ - {\gamma _i}T}}\sigma _i^c({\hat{ \bar X}_i}(t + T)) - \sigma _i^c({\hat{ \bar X}_i}(t))}\\
{2w_i^1\int\limits_t^{t + T} {{e^{ - {\gamma _i}(\tau  - t)}}\sigma _i^a({\hat{ \bar X}_i})\upsilon _{i,1}^kd\tau } }\\
 \vdots \\
{2w_i^{{p_i}}\int\limits_t^{t + T} {{e^{ - {\gamma _i}(\tau  - t)}}\sigma _i^a({\hat{ \bar X}_i})\upsilon _{i,{p_i}}^kd\tau } }
\end{array}} \right] \label{eq:96}
\end{align}

To simultaneously solve the ARE (\ref{eq:83}) and find the optimal policy (\ref{eq:72}) online, i.e., solve Problem 2, the off-policy RL algorithm is given in Algorithm 1. Note that, Algorithm 1 solve Problem 2 without requiring any knowledge about graph topology $\mathcal{G}$, and absolute states or outputs of followers. Moreover, Algorithm 1 solve problem 2 without the restrictive assumption of which all followers should have knowledge about the leaders' dynamics $S$ and $D$.
\\
\\
\\
\\
\\
\\
\noindent\rule{17.8cm}{0.4pt} \\
\textbf{Algorithm 1: Online Off-policy RL Algorithm} \\
\rule{17.8cm}{0.2pt} \\
1) Start with an admissible control policy $u_i^\kappa  = K_i^\kappa {\hat {\bar X}_i} + \varepsilon $, where $\varepsilon $ is the exploration noise and $\kappa  = 0$, and collect required information, i.e., $\hat {\bar \Xi}_i^k$ and $\hat {\bar Y}_i^k$. \\
2) Solve the least square problem (\ref{eq:92}) to obtain $\hat {\bar W}{{_i^c}^k}$ and $\hat {\bar W}{{_i^a}^{k + 1}}$ simultaneously. \\
3) Let $\kappa  = \kappa  + 1$, and repeat step 2 until $\| {{\hat {W}_i}^k - {{\hat {W}}_i}^{k - 1}} \| \le \tau $, where $\tau $ is a small predefined positive constant.\\
4) On convergence set $u_i^* = \hat u_i^k = {({\hat {\bar W}{{_i^a}^k}})^T}\sigma _i^a({\hat {\bar X}_i})$ as the optimal control policy. \\
\rule{17.8cm}{0.2pt} \\

\begin{remark}
Note that, due to the terms ${\hat \eta _i}$ and  ${\xi _i}$ in (\ref{eq:94}) and (\ref{eq:96}), information about $\hat {\bar \Xi}_i^k$ and $\hat {\bar Y}_i^k$, which are required in Algorithm 1, cannot be collected and used from the beginning of the learning process. However, once the distributed observers in (\ref{eq:14}) and (\ref{eq:35})-(\ref{eq:37}) converge, information about $\hat {\bar \Xi}_i^k$ and $\hat {\bar Y}_i^k$ is collected, and then, it is used by Algorithm 1. Note also that, as mentioned in Remark 13, the observers gains appear in  (\ref{eq:14}), (\ref{eq:35})-(\ref{eq:37}), and the control gains ${\bar K_i^*}$ in (\ref{eq:68}) can be designed  independently, and choosing sufficiently large ${\mu _i}$ for $i \in \mathcal{F}$, ${\beta _1}$, ${\beta _2}$, and ${\beta _3}$ makes the convergence of the observers sufficiently fast, and therefore, the effect of this issue on Algorithm 1 is negligible.
\end{remark}

\section{Simulation Results}\label{sec6}

In this section, an output containment control example is given to validate the proposed approach. Consider the multi-agent system consists of four heterogeneous followers and three leaders. The leaders' dynamics are given as
\begin{align}
S = \left[ {\begin{array}{*{20}{c}}
1&{ - 3}\\
1&{ - 1}
\end{array}} \right], 
D = \left[ {\begin{array}{*{20}{c}}
1&0\\
0&1
\end{array}} \right]  \label{eq:97}
\end{align}

The initial leaders' state vectors are chosen as ${\omega _5}(0) = {[{\begin{array}{*{20}{c}}
2&1
\end{array}}]^T}$, ${\omega _6}(0) = {[ {\begin{array}{*{20}{c}}
{ - 1}&1
\end{array}}]^T}$, and ${\omega _7}(0) = {[ {\begin{array}{*{20}{c}}
{0.4}&{0.4}
\end{array}}]^T}$.

The dynamics of heterogeneous followers are given as
\begin{align}
\begin{array}{l}
{A_1} = \left[ {\begin{array}{*{20}{c}}
{ - 1}&0&0\\
0&3&0\\
0&3&2
\end{array}} \right],{B_1} = \left[ {\begin{array}{*{20}{c}}
4\\
1\\
1
\end{array}} \right],C_1^T = 
\left[ \begin{array}{cc}
0 & 0  \\
1 & 0  \\
0 & 1  \end{array} \right] \\
\\
{A_2} = \left[ {\begin{array}{*{20}{c}}
1&{ - 1}\\
1&0
\end{array}} \right],{B_2} = \left[ {\begin{array}{*{20}{c}}
{ - 2}\\
{ - 1}
\end{array}} \right],C_2^T = \left[ {\begin{array}{*{20}{c}}
1&0\\
0&1
\end{array}} \right] \\
\\
{A_3} = \left[ {\begin{array}{*{20}{c}}
2&0\\
2&2
\end{array}} \right],{B_3} = \left[ {\begin{array}{*{20}{c}}
{ - 1}\\
{ - 1}
\end{array}} \right],C_3^T = \left[ {\begin{array}{*{20}{c}}
1&0\\
0&1
\end{array}} \right]\\
\\
{A_4} = \left[ {\begin{array}{*{20}{c}}
{ - 1}&0&0\\
0&2&{ - 1}\\
0&3&3
\end{array}} \right],{B_4} = \left[ {\begin{array}{*{20}{c}}
5\\
1\\
2
\end{array}} \right],C_4^T = 
\left[ \begin{array}{cc}
0 & 0  \\
1 & 0  \\
0 & 1  \end{array} \right]. \\
\end{array} \label{eq:98}
\end{align}

The communication graph among the agents is given in Fig.~\ref{fig:1}, where nodes 5, 6, and 7 represent the leaders and other nodes represent four heterogeneous agents. Moreover, all the communication weights, in Fig.~\ref{fig:1}, are chosen to be one. It can be verified that Assumptions 1-5 are satisfied.

 \begin{figure}[!t]
  \centering{\includegraphics[width=3 in]{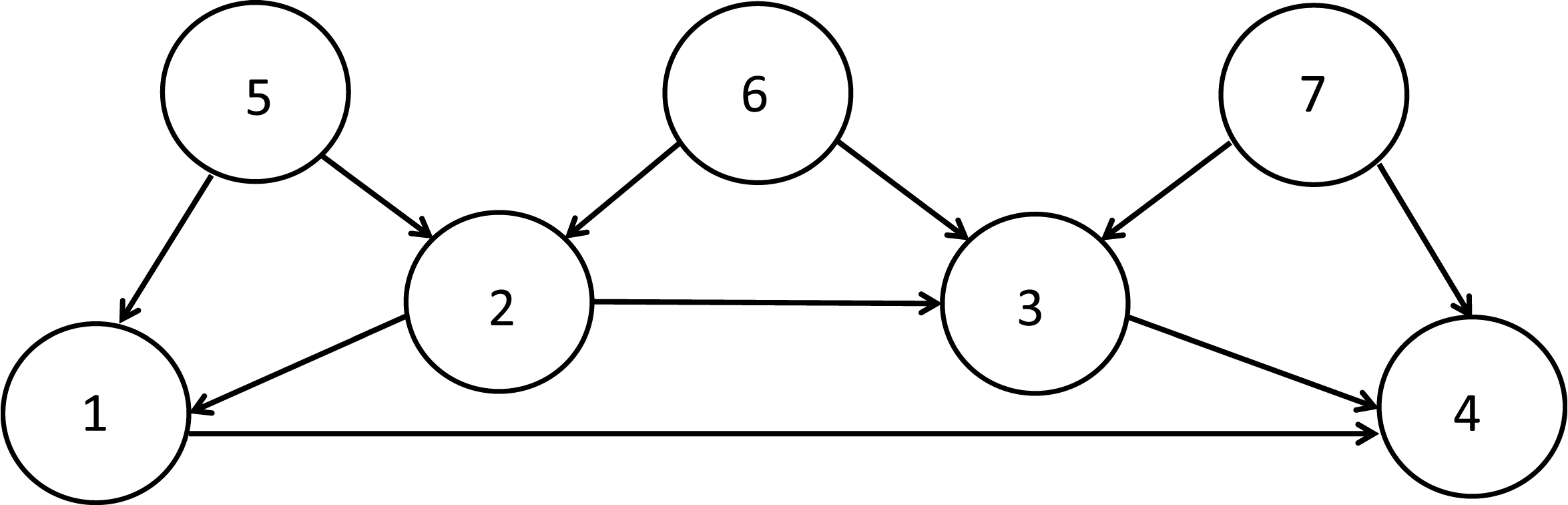}}
  \caption{The multi-agent systems communication graph.}\label{fig:1}
 \end{figure}
  
 The distributed observer (\ref{eq:14}) is implemented for  $i=1,2,3,4$. The observer matrices and gains are set as 
 \begin{align}
 {\Phi _1} = {I_3},{E_1} = \left[ {\begin{array}{*{20}{c}}
 2&0&0\\
 0&1&0\\
 0&0&1
 \end{array}} \right],{R_1} = \left[ {\begin{array}{*{20}{c}}
 {{\rm{0}}{\rm{.19}}}&{{\rm{ - 0}}{\rm{.11}}}\\
 {{\rm{ - 0}}{\rm{.11}}}&{{\rm{0}}{\rm{.26}}}
 \end{array}} \right] \nonumber 
  \end{align} 
  \begin{align}
 &{\Phi _2} = {I_2},{E_2} = {I_2},{R_2} = \left[ {\begin{array}{*{20}{c}}
   {0.33}&0\\
   0&1
   \end{array}} \right] \nonumber \\
  \nonumber \\
 &{\Phi _3} = {I_2},{E_3} = {I_2},{R_3} = \left[ {\begin{array}{*{20}{c}}
 {{\rm{0}}{\rm{.23}}}&{{\rm{ - 0}}{\rm{.09}}}\\
 {{\rm{ - 0}}{\rm{.09}}}&{{\rm{0}}{\rm{.23}}}
 \end{array}} \right] \label{eq:99}  \\ 
 \nonumber\\
 &{\Phi _4} = {I_3},{E_4} = \left[ {\begin{array}{*{20}{c}}
 2&0&0\\
 0&1&0\\
 0&0&1
 \end{array}} \right],{R_4} = \left[ {\begin{array}{*{20}{c}}
 {{\rm{0}}{\rm{.22}}}&{{\rm{ - 0}}{\rm{.06}}}\\
 {{\rm{  - 0}}{\rm{.06}}}&{{\rm{0}}{\rm{.16}}}
 \end{array}} \right], \nonumber
 \end{align} 
 and ${\mu _i} = 1$, for  $i=1,2,3,4$. The error between state observer (14) and followers' states is given in Fig.~\ref{fig:2}. It can be observed from Fig.~\ref{fig:2} that the error between state observers and the followers' states converges to zero.

 \begin{figure}[!t]
 \centering{\includegraphics[width=3.6in]{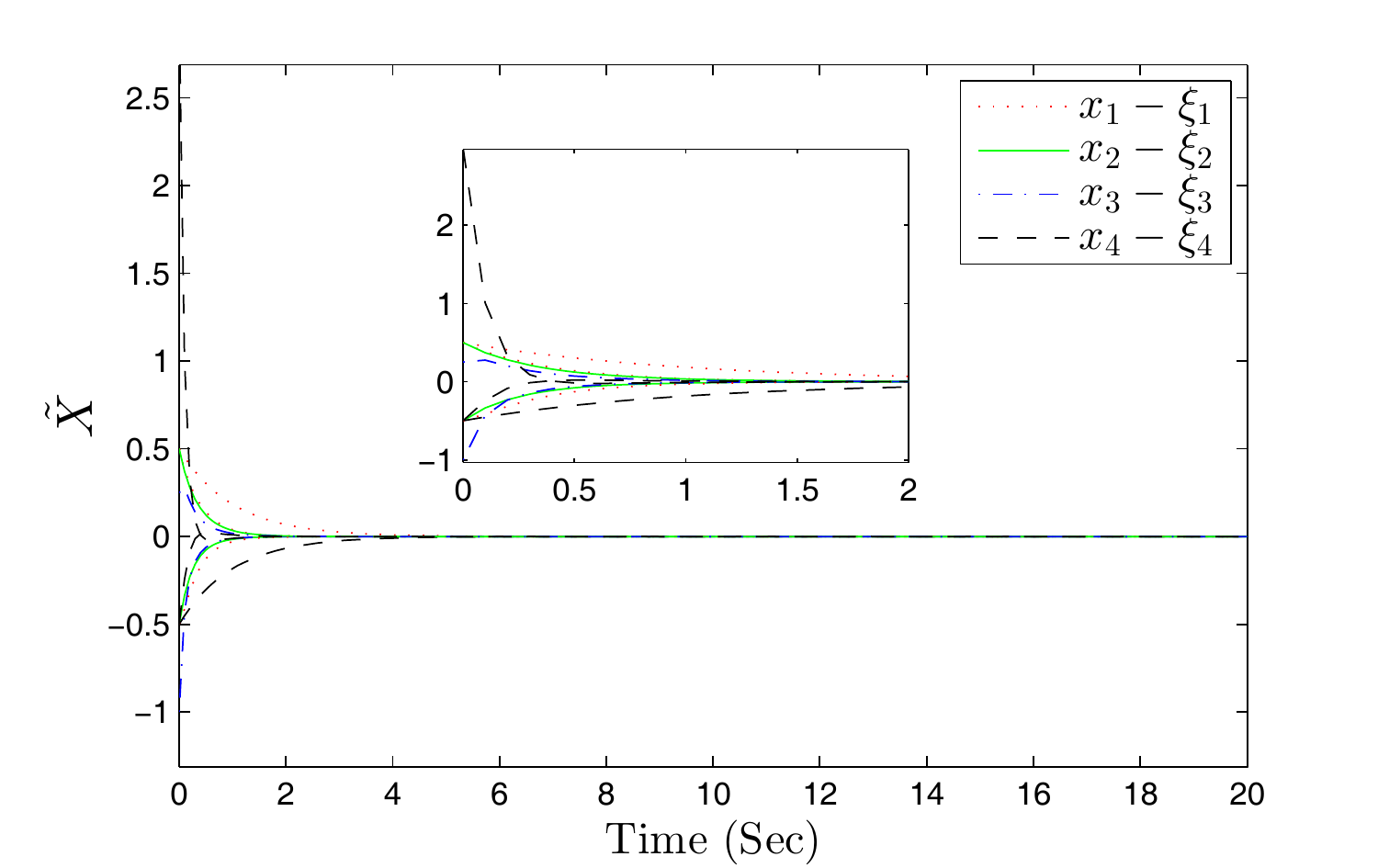}}
 \caption{Error between state observer (\ref{eq:14}) and followers' states.}  \label{fig:2}
 \end{figure}

 The distributed adaptive observer (\ref{eq:35}) along with adaptation laws (\ref{eq:36}) and (\ref{eq:37}) are also implemented for $i=1,2,3,4$. The observer and adaptive laws gains are set as ${\beta _1} = 3$, ${\beta _2} = 10$, and ${\beta _3} = 3$. It can be seen from Fig.~\ref{fig:3} that the error between observers (\ref{eq:35}) and the convex combination of the leaders' states converges to zero, i.e., the adaptive observers (\ref{eq:35}) converge to the convex combination of the leaders' states.
  \begin{figure}[!t]
  \centering{\includegraphics[width=3.6in]{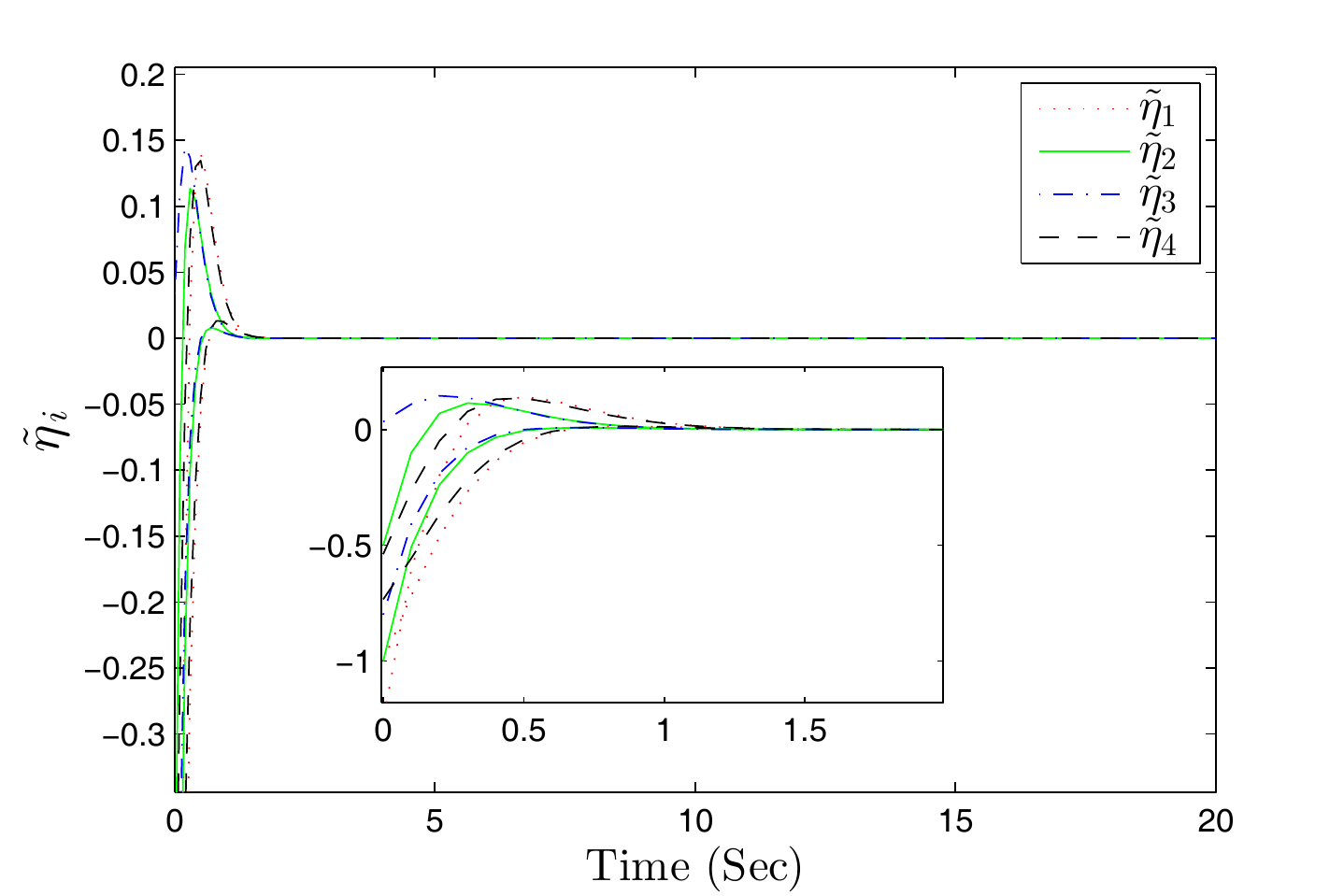}}
  \caption{Error between the adaptive observer (\ref{eq:35}) and the convex combination of the leaders' states.} \label{fig:3}
  \end{figure}
    
 By solving the output regulator equation (\ref{eq:6}), one can obtain
  \begin{align}
 \begin{array}{l}
 {\Pi _1} = \left[ {\begin{array}{*{20}{c}}
 4&{ - 16}\\
 {\begin{array}{*{20}{c}}
 1\\
 0
 \end{array}}&{\begin{array}{*{20}{c}}
 0\\
 1
 \end{array}}
 \end{array}} \right],{\Gamma _1} = \left[ {\begin{array}{*{20}{c}}
 { - 2}&{ - 3}
 \end{array}} \right]\\
 \\
 {\Pi _2} = {I_2},{\Gamma _2} = \left[ {\begin{array}{*{20}{c}}
 0&1
 \end{array}} \right]\\
 \\
 {\Pi _3} = {I_2},{\Gamma _3} = \left[ {\begin{array}{*{20}{c}}
 1&3
 \end{array}} \right]\\
 \\
 {\Pi _4} = \left[ {\begin{array}{*{20}{c}}
 {{\rm{3}}{\rm{.33}}}&{ - {\rm{11}}{\rm{.66}}}\\
 {\begin{array}{*{20}{c}}
 1\\
 0
 \end{array}}&{\begin{array}{*{20}{c}}
 0\\
 1
 \end{array}}
 \end{array}} \right],{\Gamma _4} = \left[ {\begin{array}{*{20}{c}}
 { - 1}&{ - 2}
 \end{array}} \right]. \label{eq:100}
 \end{array}
 \end{align} 

The weight matrices $Q_i$, $W_i$, and the discount factors $\gamma_i$ for $i=1,2,3,4$ are chosen as
\begin{align}
&{Q_1} = 10{I_3},{W_1} = 10,{\gamma _1} = 0.01 \nonumber \\
&{Q_2} = 10{I_2},{W_2} = 10,{\gamma _2} = 0.01 \nonumber \\
&{Q_3} = 10{I_2},{W_3} = 10,{\gamma _3} = 0.01 \nonumber \\
&{Q_4} = 10{I_3},{W_4} = 10,{\gamma _4} = 0.01.  \label{eq:101}
\end{align} 

Using (\ref{eq:98}), (\ref{eq:101}), and LQR method, the optimal dynamic output feedback gains are 
 
 \begin{align}
 &K_1^1 = \left[ {\begin{array}{*{20}{c}}
 {{{ - 0}}{{.13}}}&{{{7}}{{.91}}}&{{{ - 20}}{{.65}}}
 \end{array}} \right] \nonumber \\
 &K_2^1 = \left[ {\begin{array}{*{20}{c}}
 {\rm{1}}&{{\rm{1}}{\rm{.78}}}
 \end{array}} \right] \nonumber \\
 &K_3^1 = \left[ {\begin{array}{*{20}{c}}
 {{\rm{ - 0}}{\rm{.23}}}&{{\rm{8}}{\rm{.61}}}
 \end{array}} \right] \nonumber \\
 &K_4^1 = \left[ {\begin{array}{*{20}{c}}
 {{{ - 0}}{{.29}}}&{{{7}}{{.02}}}&{{{ - 10}}{{.1}}}
 \end{array}} \right].  \label{eq:102}
 \end{align}
  
 We now use Algorithm 1 to find the optimal dynamic output feedback control online in real-time. To do so, time interval is set as $T = 0.5$ Sec, and $N_1^{\sigma _c} = 5$,  $N_2^{{\sigma _c}} = 4$, $N_3^{{\sigma _c}} = 4$, $N_4^{{\sigma _c}} = 5$, $M_1^{{\sigma _c}} = 15$, $M_2^{{\sigma _c}} = 10$, $M_3^{{\sigma _c}} = 10$, $M_4^{{\sigma _c}} = 15$, $M_1^{{\sigma _a}} = 5$, $M_2^{{\sigma _a}} = 4$, $M_3^{{\sigma _a}} = 4$,  $M_4^{{\sigma _a}} = 5$, $\sigma _i^c({\hat {\bar X}_i}) = {[ {\begin{array}{*{20}{c}}
 {\hat {\bar X}_{i1}^2}&{{{\hat {\bar X}}_{i1}}{{\hat {\bar X}}_{i2}}}& \ldots &{\hat {\bar X}_{iN_i^{{\sigma _c}}}^2}
 \end{array}}]^T}$, ${\sigma_i ^a}({\hat {\bar X}_i}) = {[ {\begin{array}{*{20}{c}}
 {{{\hat {\bar X}}_{i1}}}& \cdots &{{{\hat {\bar X}}_{iM_i^{{\sigma _a}}}}}
 \end{array}} ]^T}$, for $i=1,2,3,4$.  
 
   \begin{figure}[!t]
    \centering{\includegraphics[width=3.6in]{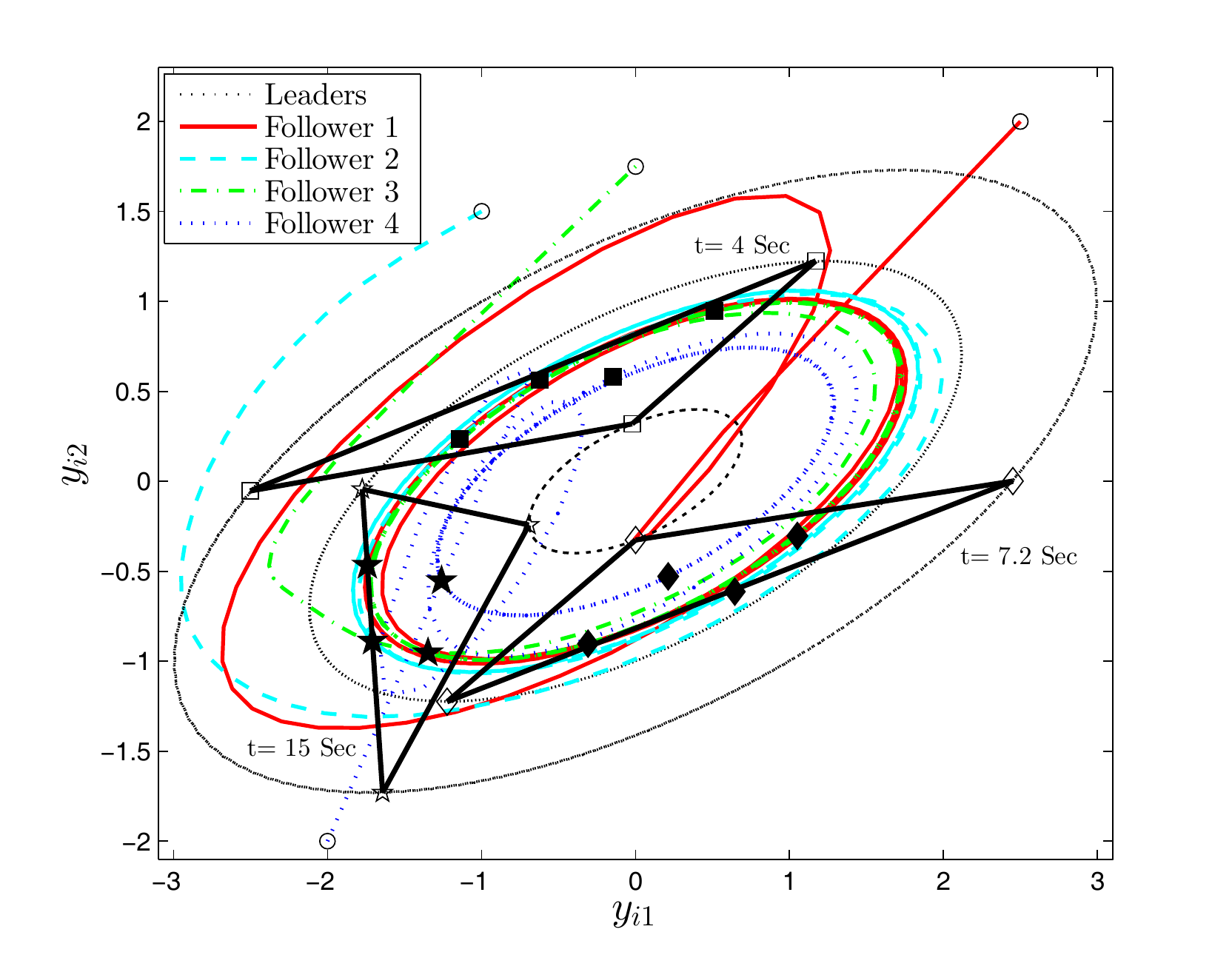}}
    \caption{The outputs of the agents using Algorithm 1, along with adaptive observer (\ref{eq:35}) and state observer (\ref{eq:14}).}    \label{fig:4}
    \end{figure}
   
    \begin{figure}[!t]
    \centering{\includegraphics[width=3.6in]{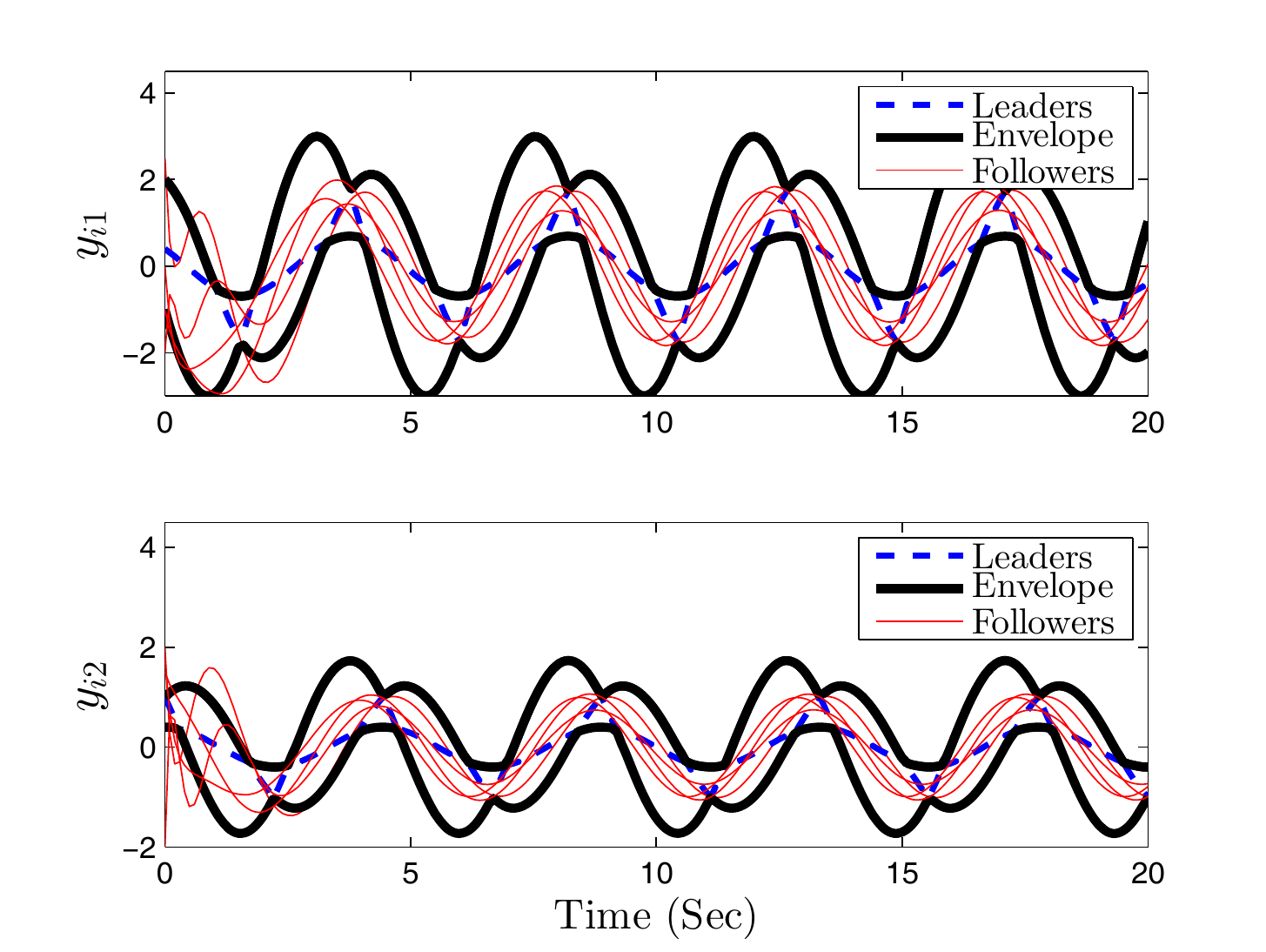}}
    \caption{ The outputs of all four heterogeneous agents.} \label{fig:5}
    \end{figure}

  \begin{figure}[!t]
   \centering{\includegraphics[width=3.6in]{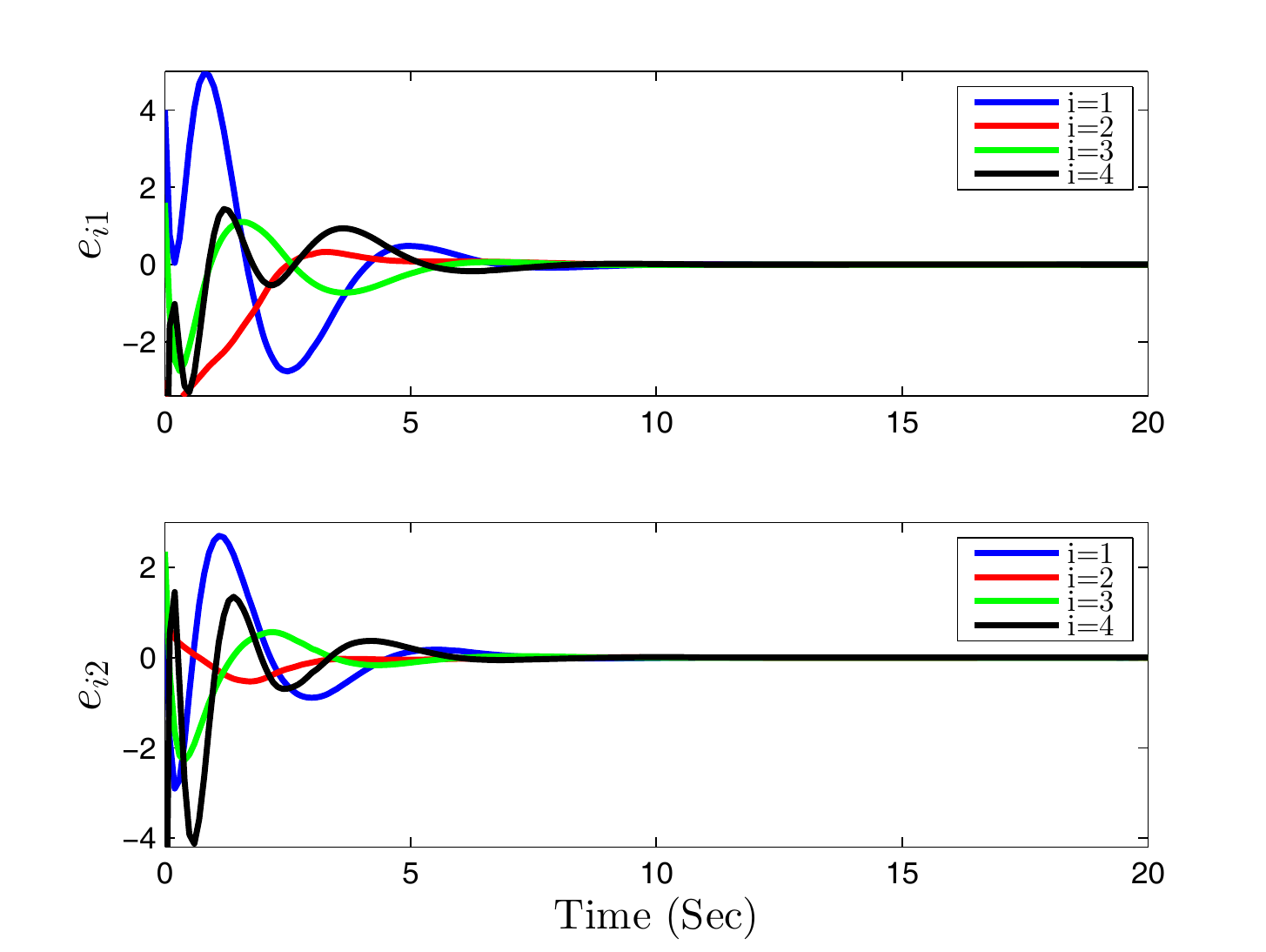}}
   \caption{Time history of output containment errors.}    \label{fig:6}
   \end{figure} 
 Figs.~\ref{fig:4} and ~\ref{fig:5} show the evaluation of learned optimal dynamic output feedback controls along with the state observer (\ref{eq:14}), and the adaptive observer (\ref{eq:35}) with adaptation laws (\ref{eq:36}) and (\ref{eq:37}) for the heterogeneous multi-agent systems (\ref{eq:98}). Fig.~\ref{fig:4} confirms that the followers move into the convex hull formed by the leaders' outputs. Note that, there exist directed paths from the leaders 5 and 6 to followers 1 and 2 and there is no directed path from the leader 7 to these followers, i.e., the leader 7 is disconnected from the followers 1 and 2. In this case, the convex hull is a line segment which is spanned by the outputs of these two leaders. Therefore, these two followers converge into the line segment which is spanned by the outputs of the leaders 5 and 6. Note also that, there exists a directed path from the leaders 5, 6, and 7 to the followers 3 and 4. Therefore, these two followers converge into the triangular area which is spanned by the outputs of the leaders 5, 6, and 7. The followers and the leaders' outputs along with the envelope of the leaders' outputs are shown in Fig.~\ref{fig:5}. It can be seen in Fig.~\ref{fig:5} that the followers' outputs move into the envelopes formed by the leaders' outputs and stay in them. The time history of output containment errors of followers is shown in Fig.~\ref{fig:6}. It can be observed that the containment errors decay to zero. Therefore, it can be seen that the followers' outputs converge to the convex hull formed by leaders' outputs, i.e., the containment control problem 1 is solved.
  
These results show that the introduced approach solves Problem 2 without requiring all followers to have knowledge of leaders' dynamics or states and based on only relative output measurements of followers with respect to their neighbors and the information which is broadcasted to them through the communication network.

\section{Conclusions}\label{sec7}

In this paper, an online optimal relative output-feedback based solution for the output containment control problem of linear multi-agent systems was presented. The followers were assumed heterogeneous in dynamics and dimensions. First, a distributed dynamic relative output-feedback control protocol was developed based on cooperative output regulation framework, which was provided the offline and non-optimal solution to the output containment control problem in hand. The proposed control protocol was composed of a feedback part and a feed-forward part. However, the feedback and feed-forward states were assumed to be unavailable to followers and were estimated using two distributed observers. A distributed adaptive observer was designed to estimate followers' states using only relative output measurements of followers with respect to their neighbors and the information which is broadcasted to them through the communication network. Another distributed observer was developed to estimate the convex hull of the leaders' states. To relax the restrictive assumption in existent work that each follower should have the knowledge of the leader's dynamics, an adaptive distributed observer was next designed to estimate both the leaders' dynamics and convex combination of the leaders' states. Optimality was explicitly imposed in finding the feedback and feed-forward control gains to not only assure convergence of followers' outputs to a convex combination of the leaders' outputs, but also optimize their transient output containment errors. To do this, augmented AREs were employed to solve the optimal output containment control problem in hand. An off-policy RL algorithm on an actor-critic structure was next developed to solve these AREs online in real time, based on using only relative output measurements of followers with respect to their neighbors and the information which is broadcasted to them through the communication network, and without requirement of knowing the complete knowledge of the leaders' dynamics by all followers. Finally, a simulation example verified the effectiveness of the proposed algorithm.







\end{document}